\newcommand{\url}[1]{#1}
\newcommand{\card}[1]{\lvert #1\rvert}
\def\E{\mathbb{E}}
\def\d{\mathrm{d}}
\def\p{\mathrm{p}}
\def\r{\mathrm{r}}
\def\Pc{\mathcal{P}}
\def\epl{\mathrm{EPL}}
\def\mepl{\mathrm{MEPL}}
\newcommand{\rmnum}[1]{\romannumeral #1}
\newcommand{\ignore}[1]{}
\newtheorem{theorem}{Theorem}
\newtheorem{lemma}{Lemma}
\newtheorem{corollary}{Corollary}
\newtheorem{definition}{Definition}
\newtheorem*{theorem*}{Theorem}
\begin{document}


\title{Self-Adjusting Networks \\ to Minimize Expected Path Length}


\author{\IEEEauthorblockN{Chen Avin\IEEEauthorrefmark{1}, Michael Borokhovich\IEEEauthorrefmark{1}, Bernhard Haeupler\IEEEauthorrefmark{2} and Zvi Lotker\IEEEauthorrefmark{1}}
\IEEEauthorblockA{\IEEEauthorrefmark{1}Ben-Gurion University of the Negev\\
Emails: \{avin, borokhom, zvilo\}@cse.bgu.ac.il}
\IEEEauthorblockA{\IEEEauthorrefmark{2}Massachusetts Institute of Technology\\
Email: haeupler@mit.edu}
}



\maketitle

\begin{abstract}
Given a network infrastructure (e.g., data-center or on-chip-network) and a distribution on the source-destination requests, the expected path (route) length is an important measure for the performance, efficiency and power consumption of the network.
In this work we initiate a study on \emph{self-adjusting networks}: networks that use local-distributed mechanisms to adjust the position of the nodes (e.g., virtual machines) 
in the network to best fit the route requests distribution. 
Finding the optimal placement of nodes is defined as the minimum expected path length (MEPL) problem. 
This is a generalization of the minimum linear arrangement (MLA) problem where the network infrastructure is a line and the computation is done centrally. 
In contrast to previous work, we study the distributed version and give efficient and simple approximation algorithms for interesting and practically relevant special cases of the problem. In particular, we consider grid networks in which the distribution of requests is a symmetric product distribution. In this setting, we show that a simple greedy policy of position switching between neighboring nodes to locally minimize an objective function, achieves good approximation ratios. We are able to prove this result using the useful notions of expected rank of the distribution and the expected distance to the center of the graph.
\end{abstract}







\section{Introduction}\label{sec:intro}






In the last decade we have witnessed two new major and related phenomena in distributed computing. The first is the emerge of huge data centers and warehouse-scale computers. The second phenomenon is the decentralization and parallelism of workload in single multi-core computers. In both cases (but on different scale) the system is a network of computing primitives that share global computational goals. In data centers networks, as well as in modern multiprocessor computers, multiple processes run in parallel to execute common tasks so, in many cases, these processes need to communicate with each other to work on their shared tasks.

Reducing energy waste, and in particular the power consumption of computing is one of the major challenges of the 21st century.
Both data centers and single computers are no exception, and constantly increasing their energy and power usage. For example, the total cost of power consumption of data centers in the USA alone is estimated to be 50 billion dollars~\cite{poess2008energy}. Moreover, the energy consumed by data centers is estimated to double every five years~\cite{energy2007report}. The focus of this work is to improve upon the energy that is consumed by routing in such systems. It is estimated that in data centers the energy consumed by routing is about 20\%-30\% of the total energy~\cite{heller2010elastictree}. Routing in network-on-chip (NoC) consumes even up to 50\% of the total energy~\cite{mirza2007empirical}. These numbers pose our community both an opportunity and a challenge. The opportunity is to gain significant energy savings for these systems; the challenge is to design and implement clever and simple algorithms that can improve routing efficiency.

Another common property of these systems is that they all operate in a fixed network infrastructure. This means that we cannot change the structure of the network by, for example, rewiring links. But instead, what we can do, is to move the locations of processes (e.g., virtual machines) between the different computers (or CPUs). 
%
%
In this paper, we formulate the problem of saving energy on a fixed infrastructure network using migration of processes. 
The basic idea is that the energy cost of routing in a network is proportional to the length of the routes which suggests the following: If we can make the routes lengths (or the \emph{expected} route length) shorter, then we can save energy. We devise local and distributed algorithms that (re-)place processes in the network to reduce the expected path length.
This can be achieved, for example, by Software Defined Networking (SDN) \cite{SDN:2009} -- the concept, which provides, among others, much better control over the network functionality. 
In SDN, a software management platform may support an abstraction for moving a selected process from one physical machine to another. Recently, this approach became practical, when Google announced \cite{summit:google:talk} the implementation of OpenFlow \cite{McKeown:2008:Openflow} in its own backbone.

The problem of minimizing the total energy consumed by routing is dependent on two major properties of the system: (i) the infrastructure (topology) of the communication network and (ii) the statistical pattern of route requests between sources and destinations. We first show that even in a very simple pattern such as every node has an activity level and the probability to send or receive a message is proportional to its level, the problem is NP-complete on general network topologies. Secondly, even when the network is ``simpler'' or regular, like a grid network, the problem can still be hard if the request distribution is ``complex'' in some sense. With this in mind we turn to analyze approximation algorithms for the setting where both the topology and the requests have nice properties. Our routing and activity distributions are partially justified from real data~\cite{gummadi2003measurement,klemm2004characterizing}. We concentrate on local and distributed algorithms, namely, processes can be exchanged (i.e., relocated) only between nearby nodes without any centralized coordination.

\subsection{Overview of our results}


First, we formulate the discussed problem as the \emph{minimum expected path length} (MEPL) problem, that is, given a network infrastructure and a distribution of requests, minimizing route costs by finding an optimal \emph{placement} for processes in the network.
When the network is a line, MEPL is identical to the minimum linear arrangement (MLA)~\cite{johnson1979computers} which is known to be NP-complete.
In this work we consider $d$-dimensional grid networks, $d\ge1$, and requests that comes independently from a \emph{symmetric product distribution} where the frequency of a route request $(u,v)$ is a multiplication of the \emph{activity} levels of both $u$ and $v$. 
In contrast to previous works, our goal is to design simple distributed algorithms for these more realistic settings.

We first show that MEPL is NP-complete if (\rmnum{1}): we only assume that the network is a $2$-dimensional grid, and (\rmnum{2}): we only assume that the requests come from a \emph{symmetric product distribution}.
But, somewhat surprisingly, if both conditions hold, we are able to present a simple, local, distributed algorithm that achieves good approximation to the optimal solution for the MEPL problem.
Our algorithm is self-adjustable in the sense that nodes switch processes based on the continuously observed sequence of route requests each node is involved in. This approach was inspired and bears some similarity to self-adjusting data structures like splay trees~\cite{sleator1985self}. In particular we are able to show (informal):
\begin{theorem*} For a \emph{$d$-dimensional grid} network and a \emph{symmetric product distribution} of requests there is a simple distributed algorithm, which defines a local switching policy between a process and its neighbors that achieves a \emph{constant} approximation to the minimum expected path length (MEPL) problem.
\end{theorem*}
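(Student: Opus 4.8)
The plan is to sandwich the expected path length of any placement between two constant multiples of a single, analytically tractable quantity — the \emph{weighted distance to the center} of the grid — and then to argue that the local greedy switching policy drives this quantity to within a constant factor of its optimum. Write the activity level of process $i$ as $w_i$, set $W=\sum_i w_i$, let $c$ denote a center node of the grid, and for a placement $\phi$ define
\begin{equation*}
R(\phi)=\sum_i w_i\, \mathrm{dist}\bigl(\phi(i),c\bigr),\qquad \epl(\phi)=\sum_{i,j} w_i w_j\, \mathrm{dist}\bigl(\phi(i),\phi(j)\bigr).
\end{equation*}
Because the requests come from a symmetric product distribution, $\epl(\phi)$ is \emph{exactly} this separable double sum over pairs, and this separability is what makes the sandwiching possible.

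For the \emph{upper bound} I would route every pair through the center: the triangle inequality gives $\mathrm{dist}(\phi(i),\phi(j))\le \mathrm{dist}(\phi(i),c)+\mathrm{dist}(c,\phi(j))$, so
\begin{equation*}
\epl(\phi)\le \sum_{i,j} w_i w_j\bigl(\mathrm{dist}(\phi(i),c)+\mathrm{dist}(c,\phi(j))\bigr)=2\,W\,R(\phi).
\end{equation*}
For the \emph{lower bound} I would use that for every fixed $i$ the point $\phi(i)$ is itself a feasible center, so $\sum_j w_j \mathrm{dist}(\phi(i),\phi(j))\ge \min_x \sum_j w_j \mathrm{dist}(x,\phi(j))=:\mathrm{med}(\phi)$; averaging against $w_i$ yields $\epl(\phi)\ge W\cdot \mathrm{med}(\phi)$. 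Finally $\mathrm{med}(\phi)\ge R^*$, where $R^*$ is the smallest weighted center distance achievable over all placements and all centers. The quantity $R^*$ is precisely the \emph{expected rank}: by a rearrangement/volume argument, the optimal assignment places the $k$-th heaviest process at the $k$-th nearest node to the center, and since a $d$-dimensional grid has only $\Theta(r^d)$ nodes within distance $r$, that node lies at distance $\Theta(k^{1/d})$, giving $R^*=\Theta\bigl(\sum_k w_{(k)}\,k^{1/d}\bigr)$ for the sorted activities $w_{(1)}\ge w_{(2)}\ge\cdots$ (for $d=1$ this is literally $\sum_k w_{(k)}k$, the expected rank). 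Together these give $\epl(\mathrm{OPT})\ge W R^*$ and $\epl(\mathrm{ALG})\le 2WR(\mathrm{ALG})$, so the approximation ratio is at most $2\,R(\mathrm{ALG})/R^*$.

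It therefore remains to prove $R(\mathrm{ALG})=O(R^*)$, and this is the step I expect to be the main obstacle. I would (i) exhibit a global potential — either $\epl$ itself or $R$ — and show that every profitable neighbor swap, computed from locally observable request frequencies, strictly decreases it, so the dynamics converge to a swap-stable placement; and (ii) prove a structural characterization of swap-stable placements, namely that the absence of any beneficial local swap forces the activities to be sorted into shells around $c$ up to a constant factor, which bounds $R(\text{stable})=O(R^*)$. The difficulty is precisely that swap-stability is a weak, purely \emph{local} condition while $R^*$ is a \emph{global}, rank-based quantity, and closing this gap — while also tolerating that each node only \emph{estimates} its $w_i$ from the observed request stream, which forces a law-of-large-numbers argument over a long enough window — is where the real work lies.

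Assembling the three pieces then yields $\epl(\mathrm{ALG})/\epl(\mathrm{OPT})=O(1)$ for each fixed dimension $d$, the hidden constant entering only through the grid's $\Theta(r^d)$ volume-growth factor.
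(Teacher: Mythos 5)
Your sandwiching framework is exactly the paper's: the bound $C(\varphi)\le\epl(\varphi)\le 2C(\varphi)$ via routing through the center (Lemma~\ref{lem:epl-c}), and the lower bound $\mepl\ge C_{\min}=\Omega(\E[R^{1/d}])$ via the $\Theta(r^d)$ ball-volume argument (Lemmas~\ref{lem:1-D-lower} and~\ref{lem:c-rule-lower}), so those two pieces are sound and match the paper's Lemmas essentially verbatim. The genuine gap is the step you yourself flag as the main obstacle, $R(\mathrm{ALG})=O(R^*)$, and it is worse than an obstacle: as stated, with swaps restricted to grid neighbors, the claim is \emph{false} for $d\ge 2$. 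The paper proves this explicitly: placing $n^{1/d}$ equally active nodes along a single axis-parallel line through the center is swap-stable under both the M-rule and the C-rule (any in-line swap leaves $\epl$ unchanged, and swapping an active node off the line strictly increases its distance term), yet it is a factor $\Omega(n^{1/d-1/d^2})$ — e.g.\ $\Omega(n^{1/4})$ on the $2$-dimensional grid — worse than packing the active nodes into a ball of radius $O(n^{1/d^2})$ around the center. The structural reason your step (ii) fails is that the monotonicity you can extract from stability (the paper's Lemma~\ref{lem:monoton}: activities are monotone along distance-decreasing switching paths) only constrains nodes reachable by such paths, and when $\varphi(v)$ is collinear with $c^*$ the region swept by unit-step decreasing paths degenerates to a segment, so the rank of $v$ is only \emph{linear}, not $\Theta(\d(v,c^*)^d)$, in its distance — no shell-sorting up to constants follows.

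The paper's fix is to enlarge the local move set: in addition to neighbor swaps it allows ``chess knight move'' switches at $\ell_1$-distance three (offset two in one axis, one in the other). These extra switching directions fatten the set of positions dominated by any node $v$ into a polygon of area $\Theta(\d(v,c^*)^2)$ even in the degenerate collinear case, which via Lemma~\ref{lem:monoton} forces $\r(v)\ge\tfrac{6}{16}\d(v,c^*)^2$ and hence $C(\varphi)\le\tfrac{4}{\sqrt{6}}\E[\sqrt{R}]$ for any stable placement (Lemma~\ref{lemma:c-rule-upper}); combined with the lower bound this yields the $\approx 4.62$ factor of Theorem~\ref{thm:approx-grid}. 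So to repair your plan, keep parts (i) and the sandwich as is, but replace ``swap-stable under nearest-neighbor swaps'' with stability under this slightly extended move set, and prove the polygon-counting rank bound in place of your hoped-for shell characterization; note also that the paper sidesteps your law-of-large-numbers concern by assuming each node learns its marginal request distribution exactly, analyzing only the stable states of the resulting dynamics.
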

Interestingly, we prove this theorem using a measure called \emph{expected rank} which is related to the uncertainty of a random variable in a similar manner as entropy is.

We then turn to more complex distributions of requests and discuss requests that are \emph{clustered} into disjoint groups. 
While for this setting few extremely unstable bad local minima can exist we present promising simulation results. In particular we show that for the $2$-dimensional grid that starting from a random and thus almost worst case initial state of processes locations in the network our local algorithms converge to an almost optimal local minimum. 


\noindent {\bf Organization:}
In Section~\ref{sec:related} we discuss related work and somewhat similar approaches. Section~\ref{sec:model} introduces the formal problem and definitions. The hardness of MEPL is proved in Section~\ref{sec:hardness} and then in Section~\ref{sec:product} we prove our main result, a constant factor approximation in $d$-dimensional meshes with product route distributions.
In Section~\ref{sec:cluster} we discuss a more complex setting: clustered requests; and we end the paper with a short conclusion in Section~\ref{sec:conclusion}.

\section{Related Work} \label{sec:related}


Energy saving along with green computing is an active topic of research in the recent years.
In a recent paper~\cite{lis2011brief} Lis et al. study memory architectures of microprocessors. The authors suggest that processes will migrate to a location that is closer to the data instead of what is common in today architectures, i.e., coping the data to be closer to the process. The logic behind this idea is that programs are much smaller than their data. We take this idea one step further by reducing the communication distance between two processes. 
Improving the energy efficiency of routing in networks was also considered. Batista et al.~\cite{batista2007self} used traffic engineering on grids to self-adjust to routing requests. 
In \cite{heller2010elastictree} and \cite{Shang2010Energy-aware} different authors considered data centers and tried to save energy by powers down routers and links when demand in the network  is low. Other self-adjusting routing scheme were considered, for example in scale-free networks to overcome congestion~\cite{PhysRevE.80.026114, Zhang2007177}.

The most related areas of research to our study are graph \emph{arrangement}, \emph{embedding} and \emph{labeling} problems \cite{chung1988labelings,Diaz2002A-survey}. The basic question there is to embed a guest graph $G$ into a host graph $H$ in order to minimize some objective function like the \emph{bandwidth} or the \emph{cutwidth}; we relate our study to this settings in the model section. In particular, some VLSI design problems where considered on a two dimensional grids.\cite{bhatt1987complexity,bhatt1984framework}. There are two significant differences here: first, we consider \emph{distributions} on the route requests which restrict our guest graphs and second, and more importantly, we are interested in distributed, self-adjusting algorithms to solve the problem and not a centralized solution.

As described in the introduction the self-adjusting nature of our solution was inspired by self-adjusting data structures like splay trees~\cite{sleator1985self} which adjust their structure according to requests made to the data structure in such a way that the amortized cost matches the cost of the optimal (static) solution. The local greedy switch strategy we use is related to physics and natural dynamics which indirectly try to minimize energy. Using this analogy for optimization purposes has a long history. E.g., simulated annealing~\cite{kirkpatrick1983optimization} can be seen as simulating physics while cooling the temperature, i.e., the local moves selected shift over time more and more bias from mostly random behavior to greedy energy minimization. Here we only look at greedy steps. 
In a networking context similar approaches were used for load balancing via diffusive paradigms~\cite{rabani1998local} and for routing via gradient mechanisms~\cite{PhysRevE.77.036121}.

Another very related research is about self-stabilizing graphs~\cite{jacob2009distributed,jacob2009self}. The goal there is also to maintain some objective using local edge exchanges,
mostly in an overlay network. In a similar manner we would like to extend the current work to solve MEPL on overlay and peer-to-peer networks, using edge rewiring as well.








\section{Model and Problem Definition} \label{sec:model}
We model the communication network by an undirected, unweighted and connected \emph{host} graph $H$. 
Given a graph, its vertex set is denoted $V(H)$ and its edge set by $E(H)$. We denote the number of nodes with $n = \card{V(H)}$ and the number of edges with $m = \card{E(H)}$. Let $\d_H(\cdot)$ be the distance function between nodes in $H$, i.e., for two nodes $u,v \in H$ we define $d_H(u,v)$ to be the number of edges of a \emph{shortest} path between $u$ and $v$ in $H$. 

We assume that the network serves route requests drawn independently from an arbitrary distribution $\Pc$ and messages are routed along the shortest paths in $H$. 
Alternately, we represent the distribution $\Pc$ as a weighted directed \emph{guest} graph $G$ where $\card{V(G)} = n$.
For a directed edge $(u,v) \in E(G)$ let the weight of the edge $\p(u,v)$ denote the probability of a route request for a message
from node $u$ to $v$.

%

Given a network infrastructure host graph $H$ and a distribution on the route requests represented by a guest graph $G$,
 a {\bf \emph{placement}} (or \emph{labeling} \cite{chung1988labelings}) function is
a bijective\footnote{In this work we consider the basic case where every host machine can run at most one virtual machine.} function $\varphi: V(G) \rightarrow V(H)$ which determines the locations of nodes of $G$ (processes) in the network $H$ (hosts).
Given  $G, H$ and a placement  function $\varphi$ the expected path length of route requests is defined as:
$$
\epl(G, H, \varphi) = \sum_{uv \in E(G)} \mathrm{p}(u, v) \cdot \mathrm{d}_ {H}(\varphi(u),\varphi(v))
$$

When $H$ and $G$ are clear from the context we may write just $\epl(\varphi)$.
Note a special case of this definition, when $\Pc$ is the uniform distribution: this gives the \emph{average path length} in the network which is often used in the literature instead of the diameter, for example to show that a network is a \emph{small world network}~\cite{watts1998collective}.

For $H$ and $\Pc$ we would like to find an optimal placement of the nodes in the network to minimize the expected path length. Formally:
\begin{definition}[\bf \emph{Minimum Expected Path Length problem}]
Given a host graph $H$ and a probability distribution represented by a guest graph $G$, find a placement function that minimize the expected path length:
\begin{align}
\mepl = \min_\varphi \epl(G, H, \varphi)
\end{align}
\end{definition}


As mentioned earlier, this problem is motivated by the network serving point-to-point routing requests that are independently sampled from a distribution $\Pc$. If we assume that the cost for a request ${u,v}$ is $\d(u,v)$ then the MEPL problem tries to minimize the expected cost of a route. Note that this is also equivalent to minimizing the expected usage of all links or  minimizing the expected total number of transmissions - all important metrics in terms of energy saving and efficiency.

In this work, we mostly consider local and distributed switching rules to find a good placement: rules where a node is only allowed to switch places with nodes that are in its neighborhood (i.e., close to it). The goal is that after a sequence of local switches the network will reach its minimum expected path length and solve the MEPL problem.
On the one hand, our results from Section~\ref{sec:hardness} will show that this is not possible (efficiently) in a general setting even with global knowledge and non-local switches. Throughout the paper we thus consider at times simpler forms of networks and requests distributions, i.e., \emph{grid networks} and the \emph{symmetric product distributions}:

\begin{definition}[\bf \emph{$d$-dimensional grid networks}] A mesh network of size $n=k^d$ with nodes embedded on all locations 
$[k]^d$ where $[k]$ is the set of integers $1, 2, \dots, k$.
Each node is connected to all the nodes at $\ell_1$-distance one from it, i.e., each node has at most two neighbors in each of the $d$ dimensions.
\end{definition}

\begin{definition}[\bf \emph{symmetric product distribution}]\label{dfn:indep_sym_distr}
In symmetric product distribution, each node has a level of activity and the more two nodes $u$ and $v$ are active the more likely that the route $\{u,v\}$ gets requested. More precisely, we scale the activity levels of the nodes such that they form a distribution with an activity level $\p(u)$ for each node $u$ and assume that the request distribution is induced by the product of the activity levels, i.e., $\p(u,v) = \p(u) \cdot \p(v)$.
\end{definition}


\section{Hardness of MEPL}\label{sec:hardness}

In this section we show that solving the general MEPL problem is hard. Indeed, we prove two results that demonstrate how the hardness of the problem can come form either an involved network topology $G$ or the structure of the routing request distribution $\Pc$. This serves also as an additional motivation why in the rest of the paper we turn to graphs and  distributions with more realistic structure.

For both our examples it suffices to use probability distributions that only have one non-zero probability value. In our first statement, we show that even if we restrict ourselves to symmetric product distributions, the MEPL problem is hard on general networks:

\begin{lemma}
Given a host graph $H$ and a symmetric product distribution $\Pc$, it is NP-complete to decide whether the MEPL is smaller than a given value.  
\end{lemma}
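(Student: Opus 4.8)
The plan is to reduce from the \textbf{CLIQUE} problem, which is NP-complete~\cite{johnson1979computers}. The key idea is to exploit the remark that a hard instance can already be obtained from a distribution with a single non-zero value: I would take a symmetric product distribution in which the activity levels $\p(u)$ take only the two values $0$ and $1/k$. Concretely, given a CLIQUE instance $(H,k)$, I keep the same host graph $H$ (with $n = \card{V(H)}$), introduce $n$ guest nodes, assign activity level $\p(u) = 1/k$ to exactly $k$ of them and $\p(u) = 0$ to the remaining $n-k$; this is a legal activity distribution ($\sum_u \p(u) = 1$), and it induces the symmetric product distribution $\p(u,v) = \p(u)\p(v)$. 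Since the inactive nodes never generate requests, under any placement $\varphi$ the objective collapses to
\begin{equation}
\epl(\varphi) = \frac{1}{k^2} \sum_{\substack{u \neq v \\ u,v \text{ active}}} d_H(\varphi(u),\varphi(v)),
\end{equation}
so up to the fixed factor $1/k^2$ the optimal placement is the one that chooses $k$ host vertices (the images of the active nodes) minimizing the total pairwise distance among them.

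Next I would pin down the threshold value that separates yes- and no-instances. Because $H$ is connected and the $k$ active nodes occupy distinct host vertices, each of the $\binom{k}{2}$ active pairs contributes distance at least $1$, so the pairwise-distance sum is at least $\binom{k}{2}$. Equality holds if and only if every pair of chosen vertices is at distance exactly $1$, i.e. if and only if the $k$ chosen host vertices form a clique in $H$. Consequently $H$ contains a $k$-clique precisely when $\mepl$ attains the value $\frac{2}{k^2}\binom{k}{2}$ (counting ordered pairs), and otherwise every placement is strictly worse. Thus deciding whether $\mepl$ is below this threshold is exactly deciding whether $H$ has a clique of size $k$, establishing NP-hardness.

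Membership in NP is immediate and I would state it for completeness: a placement $\varphi$ is a polynomial-size certificate, and $\epl(\varphi)$ can be evaluated in polynomial time by computing all-pairs shortest paths in $H$ and summing the weighted distances. Together with the reduction above this yields NP-completeness.

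The part that requires the most care is not the combinatorial core — the clique characterization of the minimum pairwise-distance subset is clean — but the bookkeeping around the probability distribution: normalizing the activity levels so that they form a genuine distribution, correctly discarding the diagonal $u=v$ terms, and tracking the ordered-versus-unordered factor so that the threshold $\binom{k}{2}$ exactly distinguishes the clique case from the non-clique case. Once the single-non-zero-value trick reduces $\mepl$ to ``select $k$ mutually close host vertices,'' the equivalence with CLIQUE follows directly.
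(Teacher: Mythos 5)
Your proposal is correct and takes essentially the same route as the paper: a reduction from $k$-CLIQUE using a symmetric product distribution with activity level $1/k$ on $k$ nodes and $0$ elsewhere, and your threshold $\frac{2}{k^2}\binom{k}{2} = 1 - 1/k$ is exactly the paper's clique value. One small fix: since the decision problem asks whether the MEPL is \emph{smaller} than a given value and a yes-instance only \emph{attains} your threshold, you should test against a value strictly between the two cases, e.g.\ $1 - 1/k + 1/k^2$ as the paper does --- your observation that the pairwise-distance sum is integral makes this separation immediate.
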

\begin{proof}
We describe a reduction from the $k$-CLIQUE problem. In the $k$-CLIQUE problem, one is given a graph $H'$ and has to decide whether $H'$ contains a $k$-clique, that is, whether $H'$ contains a complete graph on $k$ nodes as a subgraph. This is one of Karp's $21$ NP-complete problems~\cite{karp1972reducibility}. For the reduction we take the graph $H'$ as the network's host graph $H$. As a request distribution we use a symmetric product probability distribution that puts $1/k^{2}$ probability weight on each of the pairs $V' \times V'$ formed by a subset of the nodes of size $k$ and zero probability on any other pair. If $H$ contains a $k$-clique, then the unique optimal solution to MEPL with value $1 - 1/k$ will be obtained if all $k$ nodes are placed in this clique. If $H$ does not contain a $k$-clique the there will be at least one request pair $u,v \in V'\times V'$ that is at least two far apart and the total cost will be at least $1 - 1/k + 1/k^2$. Thus, deciding whether the MEPL is smaller than $1 - 1/k + 1/k^2$ is equivalent to deciding whether $H'$ has a $k$-CLIQUE and thus, NP-hard. Lastly, it is easy to see that deciding whether the MEPL is smaller than a given value problem can be easily achieved in NP by guessing and then verifying a solution with smaller value.  
\end{proof}

This lemma shows that solving the MEPL problem for general network topologies is hard. Next, we show that even if we restrict the graph to be nice, e.g., $2$-dimensional grid, a lack of structure in the probability distribution can make the MEPL problem hard, too:

\begin{lemma}\label{lem:2Dhard}\label{LEM:2DHARD}
Given a probability distribution $\Pc$, it is NP-complete to decide whether the MEPL is smaller than a given value on a $2$-dimensional grid network.  
\end{lemma}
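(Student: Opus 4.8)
The plan is to reduce from the problem of deciding whether a graph can be embedded as a subgraph of a two-dimensional grid, i.e., whether all of its edges can be realized simultaneously by unit-length grid edges. This grid-embedding (equivalently, VLSI minimum-wire-length) decision problem is known to be NP-complete~\cite{bhatt1987complexity}. The conceptual key --- and the reason the remark that "it suffices to use a distribution with a single non-zero probability value" is exactly what is needed here --- is the following observation. If we place uniform weight $1/\card{E(G')}$ on each of the $\card{E(G')}$ pairs corresponding to the edges of a guest graph $G'$, then for any placement $\varphi$ we have $\epl(\varphi) = \frac{1}{\card{E(G')}}\sum_{uv \in E(G')} \d_H(\varphi(u),\varphi(v))$, which is precisely the total wire length of the induced grid layout, scaled by $1/\card{E(G')}$. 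Since $H$ is unweighted, each edge contributes distance at least $1$, so $\epl(\varphi) \ge 1$, with equality if and only if every guest edge is mapped to an actual grid edge, that is, if and only if $\varphi$ realizes $G'$ as a subgraph of the grid.

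Concretely, given an instance $G'$ of the grid-embedding problem on $n'$ vertices, I would build the MEPL instance as follows. Let $k = n'$ and take the host graph $H$ to be the $k \times k$ grid network, so that $n = k^2 = n'^2$; pad the guest graph with $n - n'$ isolated vertices so that it has exactly $n$ vertices and $\varphi$ is a bijection onto $V(H)$. Put probability $1/\card{E(G')}$ on an arbitrary orientation of each edge of $G'$ and zero on every other pair, which is a valid distribution with a single non-zero value. Finally, set the threshold so that we ask whether $\mepl < 1 + 1/\card{E(G')}$, which by integrality of grid distances is equivalent to asking whether $\mepl \le 1$.

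For correctness, suppose $G'$ embeds as a subgraph of a grid. Because all of its edges are unit-length, each connected component fits inside its own bounding box, and since $G'$ has only $n'$ vertices these components can be packed disjointly inside the $k \times k = n' \times n'$ grid (stacking each component in its own band of rows uses at most $\sum_i v_i = n'$ rows and width at most $n'$); the leftover positions absorb the isolated padding vertices at no cost. This yields a placement with all guest edges at distance $1$, hence $\epl(\varphi) = 1$ and $\mepl \le 1$. Conversely, if $G'$ is not a grid subgraph, then under every placement at least one guest edge is mapped to distance $\ge 2$, contributing an extra $1/\card{E(G')}$, so $\epl(\varphi) \ge 1 + 1/\card{E(G')}$ and $\mepl \ge 1 + 1/\card{E(G')}$. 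Thus $\mepl < 1 + 1/\card{E(G')}$ holds if and only if $G'$ is a subgraph of the grid. Membership in NP is immediate: guess a placement and verify $\epl$ in polynomial time.

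I expect the main obstacle to be the careful invocation of the source problem rather than the gap argument itself: one must (i) cite a grid-embedding / wire-length hardness result whose instances can be forced onto a \emph{square} $k \times k$ grid (as required by the paper's definition of grid networks), and (ii) justify the packing step showing that a graph embeddable in the unbounded grid, together with its padding vertices, fits inside an $n' \times n'$ grid. Everything else --- that a uniform distribution turns $\epl$ into scaled total wire length, and that integrality of distances produces a clean $1$ versus $1 + 1/\card{E(G')}$ gap --- is routine once the reduction target is fixed.
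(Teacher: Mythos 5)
Your proposal is essentially the paper's own proof: the paper likewise reduces from the Bhatt--Cosmadakis grid-embedding result (stated there for trees of maximum degree $4$, with weight $1/(k-1)$ on the $k-1$ tree edges), uses the identical $k \times k$ host grid with zero-weight padding, and decides the same gap threshold $1$ versus $1 + 1/\card{E}$ with the same guess-and-verify NP membership. Your only additions --- handling general (possibly disconnected) guest graphs and explicitly justifying the packing into the $n' \times n'$ grid --- are details the paper elides since connectivity of a tree makes the bounding-box argument immediate.
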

\begin{proof}
We describe a reduction from the problem of embedding a tree in a $2$-dimensional grid which was shown to be NP-complete by Bhatt and Cosmadakis~\cite{bhatt1987complexity}. More precisely, it is NP-hard to decide whether a given tree $T$ (with maximum degree $4$) is a subgraph of the grid. Given an instance of this problem in form of a tree $T$ we construct a hard MEPL instance as follows: We take the two-dimensional grid $[k]^2$ as a topology where $k$ equals the number of nodes in $T$. As a request distribution we take a subset of $k$ nodes to correspond to nodes in $T$ and put a probability mass of $1/(k-1)$ on each pair of nodes that corresponds to two neighbors in the tree $T$; all other $k^2 - (k - 1)$ node pairs have a probability of zero. The MEPL for such an instance is $1$ if and only if the tree can be embedded in the grid. If this is not the case, then at least one request pair will be separated by a path of length at least two increasing the average to at least $1 + 1/(k-1)$. Thus deciding whether the MEPL is smaller than $1 + 1/(k-1)$ is equivalent to deciding whether $T$ can be embedded into the $2$-dimensional grid. This proves that solving the MEPL problem on the $2$-dimensional grid NP-hard. To show NP-completeness it is again easy to see that deciding whether the MEPL is smaller than a given value problem can be achieved in NP. 
\end{proof}

Contrasting these two hardness results, the next sections will show that if one assumes a grid graph \emph{and} a symmetric product distribution, nice algorithmic results can be obtained. 

\section{Distributed MEPL with Symmetric Product Distributions} \label{sec:product}

For general request distribution it is hard to find a good or optimal solution even when one is not restricted to local and distributed switching rules. With this in mind, we first restrict ourselves to a simpler model of requests, namely, symmetric product distributions (Definition~\ref{dfn:indep_sym_distr}). Second, we assume $d$-dimensional grid topologies, and in particular the line and a $2$-dimensional grid. We assume that a node learns the distribution from requests it is involved in and thus, it can decide whether the switching (exchanging positions) with a  neighbor will increase or decrease the objective function, the expected path length of the network. 
The main result of this section is that under the above settings, a good approximation to the objective function can be found using only simple (greedy) local switching rules. To prove this result, we need the following definitions.

\subsection{Expected Distance to Center and Expected Rank} 
To find a good placement  for nodes which gives a good approximation to the MEPL, we define the \emph{expected center} and the \emph{expected distance} to it.
\begin{definition}[\bf \emph{center and expected distance to the center}]
The expected center of a graph $H$ and a symmetric product distribution $\Pc$, is a node $c^*$, s.t.:
$$
c^* = \underset{x \in V(H)}{\operatorname{argmin}} \sum_{u \in V(H)}\p(u)\d(u,x).
$$
The expected distance to the center $c^*$ is:
$$
C=\min_{x \in V(H)} \sum_{u \in V(H)}\p(u)\d(u,x) \text{,} 
$$
or equally:
$$
C=\sum_{u \in V(H)}\p(u)\d(u,c^*).
$$
\end{definition}
When $H$ and $\Pc$ (represented by $G$) are clear in the context, both $C$ and $c^*$ can be written as functions of a \emph{placement } $\varphi$, i.e., $C(\varphi)$ and $c^*(\varphi)$. The minimum expected distance to the center is defined then as 
$
C_{\min}=  \min_{\varphi} C(\varphi)
$.
The next lemma describes the relation between $C$ and EPL.
 \begin{lemma}\label{lem:epl-c}
For any given placement  $\varphi$,
$
2C(\varphi) \ge \epl(\varphi) \ge C(\varphi). 
$
\end{lemma}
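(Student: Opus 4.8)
The plan is to derive both inequalities directly from the definitions by expanding $\epl(\varphi)$ using the product structure $\p(u,v)=\p(u)\p(v)$ and exploiting the fact that the activity levels form a distribution, i.e.\ $\sum_{u}\p(u)=1$. Written in terms of host distances,
$$
\epl(\varphi)=\sum_{u,v}\p(u)\p(v)\,\d_H(\varphi(u),\varphi(v)),
$$
where the sum may be taken over all ordered pairs; the diagonal terms vanish since $\d_H(\varphi(u),\varphi(u))=0$, so including them is harmless.

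For the \emph{upper bound} $\epl(\varphi)\le 2C(\varphi)$, I would route every pair through the expected center $c^*$ and apply the triangle inequality $\d_H(\varphi(u),\varphi(v))\le \d_H(\varphi(u),c^*)+\d_H(c^*,\varphi(v))$. Substituting this and splitting the double sum gives two terms; in each one I factor out the free variable's marginal and use $\sum_{v}\p(v)=1$, so that each term collapses to $\sum_{u}\p(u)\,\d_H(\varphi(u),c^*)=C(\varphi)$. Adding the two copies yields $2C(\varphi)$.

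For the \emph{lower bound} $\epl(\varphi)\ge C(\varphi)$, the key observation is that $c^*$ is by definition the best possible rendezvous point, so for \emph{any} fixed host node $x$,
$$
\sum_{u}\p(u)\,\d_H(\varphi(u),x)\ge C(\varphi).
$$
Hence, fixing the second endpoint $v$ and taking $x=\varphi(v)$, the inner sum $\sum_{u}\p(u)\,\d_H(\varphi(u),\varphi(v))$ is at least $C(\varphi)$. Averaging this over $v$ with the weights $\p(v)$ (which again sum to $1$) gives $\epl(\varphi)\ge C(\varphi)$.

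Both directions are short, so I do not anticipate a genuine obstacle. The only points requiring care are the normalization $\sum_{u}\p(u)=1$, which is what lets each factored marginal disappear, and the distinction between the \emph{fixed} center $c^*$ used in the upper bound and the \emph{minimization over all $x$} that defines $C(\varphi)$ and powers the lower bound. Conceptually the lemma simply states that the expected pairwise host-distance of a symmetric product distribution is sandwiched between one and two times its expected distance to the optimal center.
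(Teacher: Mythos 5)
Your proposal is correct and matches the paper's proof essentially step for step: the upper bound via routing through $c^*$ with the triangle inequality and collapsing the double sum using $\sum_u \p(u)=1$, and the lower bound via the minimality of $c^*$ applied to the inner sum with one endpoint fixed (the paper fixes $u$ and takes $x=\varphi(u)$, you fix $v$ and take $x=\varphi(v)$ --- the same argument by symmetry). No gaps; your explicit remarks on normalization and on the fixed-$c^*$ versus minimized-over-$x$ distinction are exactly the points the paper's computation relies on implicitly.
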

\begin{proof}
To see the upper bound, we suppose that instead of routing between two nodes directly, we route every request via the center $c^*$. Routing a request in this way results in sampling two requests and summing up their distances to the center. In expectation, this is exactly $2C$. Formally (for any $\varphi$):
\begin{align*}
\epl(\varphi)&=\sum_{uv\in E(G)} \p(u) \p(v) \cdot \d(\varphi(u),\varphi(v))\\
&\le \sum_{uv\in E(G)} \p(u) \p(v) (\d(\varphi(u),c^*) +\d(c^*,\varphi(v))) \\
&=\sum_{u\in V(G)} \p(u)\d(\varphi(u),c^*)+\sum_{v\in V(G)} \p(v)\d(c^*,\varphi(v))\\
&= 2C(\varphi)
\end{align*}
The fact that $C$ is a lower bound, we show as follows:
\begin{align*}
\epl(\varphi)&=\sum_{uv\in E(G)} \p(u) \p(v) \cdot  \d(\varphi(u),\varphi(v))\\
&=\sum_{u\in V(G)} \p(u) \sum_{v\in V(G)}\p(v) \d(\varphi(u),\varphi(v))\\
&\geq \sum_{u\in V(G)} \p(u) \sum_{v\in V(G)}\p(v)  \d(c^*,\varphi(v))\\
&=\sum_{v\in V(G)}\p(v)  \d(c^*,\varphi(v))=C(\varphi)
\end{align*}
\end{proof}

\begin{corollary}
$\mepl \ge C_{\min}$
\end{corollary}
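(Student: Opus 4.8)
The plan is to derive the corollary directly from the lower bound established in Lemma~\ref{lem:epl-c}, combined with the definitions of $\mepl$ and $C_{\min}$ as minima taken over all placements. The key observation is that the inequality $\epl(\varphi) \ge C(\varphi)$ holds for \emph{every} placement $\varphi$, so in particular it holds at the placement that realizes $\mepl$.

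First I would let $\varphi^*$ denote a placement achieving the minimum expected path length, so that $\mepl = \epl(\varphi^*)$. Applying the lower bound from Lemma~\ref{lem:epl-c} at this particular placement gives $\epl(\varphi^*) \ge C(\varphi^*)$. Next I would invoke the definition $C_{\min} = \min_\varphi C(\varphi)$, which immediately yields $C(\varphi^*) \ge C_{\min}$, since $\varphi^*$ is one of the placements over which that minimum is taken. Chaining the two inequalities produces $\mepl = \epl(\varphi^*) \ge C(\varphi^*) \ge C_{\min}$, as desired.

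The only conceptual subtlety worth stating carefully is that the placement $\varphi^*$ minimizing $\epl$ need not coincide with the placement minimizing $C$. One therefore cannot substitute $C_{\min}$ for $C(\varphi^*)$ as an equality; instead the argument must pass through the generic per-placement inequality of the lemma and only afterwards relax $C(\varphi^*)$ down to its minimum $C_{\min}$. Since both steps only decrease the right-hand side, the chain of inequalities remains valid. There is no real computational obstacle here — the corollary is an immediate consequence of the lemma, and the proof is essentially a two-line chain of inequalities driven by the definitions of the two minima.
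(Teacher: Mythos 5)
Your proof is correct and follows exactly the same route as the paper: the paper also writes $\mepl = \epl(\varphi^*) \ge C(\varphi^*) \ge C_{\min}$ for the placement $\varphi^*$ optimal for $\epl$, using the lower bound of Lemma~\ref{lem:epl-c} and then relaxing to the minimum. Your remark that the $\epl$-minimizer need not be the $C$-minimizer is precisely the subtlety the paper's one-line argument implicitly handles, and you handle it the same way.
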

This follows since for the optimal placement $\varphi^*$: $\mepl = \epl(\varphi^*) \ge  C(\varphi^*) \ge C_{\min}$.

\noindent An important ingredient in bounding the performance of our local rules  will be the following measure of \emph{expected rank}.
This quantity is an interesting measure on the concentration and uncertainty of a distribution.
\begin{definition}[\bf \emph{Rank of nodes and the Expected rank}]\label{dfn:rank}
The rank of a node is the position of the node in the ordered list of nodes' probabilities (breaking ties arbitrarily). The node with the highest probability has rank $0$. The rank of the node $u \in V$ is denoted as $\r(u)$.
The expected rank of a probability distribution on the nodes of graph $G$ is:
$
\E\left[R\right]=\sum_{u\in V(G)}\p(u) \r(u)
$.
\end{definition}

We next describe the local switching rules by which our distributed algorithm works.

\subsection{(Greedy) Local Switching Strategies}
For two nodes $u,v \in V(G)$ and a placement $\varphi$, we say that $u$ is a \emph{neighbor} of $v$ if and only if $\varphi(u)\varphi(v) \in E(H)$.
A switching of $u$ and $v$ is taken to be understood as a new placement $\varphi'$ where for each $w \in V(G), w \neq u,v$ $  \varphi'(w)=\varphi(w)$ and $\varphi'(u)=\varphi(v)$ and $\varphi'(v)=\varphi(u)$, i.e., $u$ and $v$ switch places on $H$.

We propose the following greedy strategy. A node switches with a neighbor if, according to the (observed) marginal distribution on the requests involving itself and its neighbor, switching positions improves the objective value.
In this work, we consider two simple optimization rules:

\begin{enumerate}

\item \textbf{M-rule}: Node will switch locations  with its neighbor if the switch will minimize the objective function: the expected path-length between all pairs of nodes. This criterion is exactly the MEPL objective.

\item \textbf{C-rule}: Node will switch location with its neighbors if the switch will minimize the expected path-length between the \emph{center} node and all the other nodes. This objective does not give us a solution for the MEPL problem, but it will be used as an upper bound for it. 

\end{enumerate}

If nodes switch only when this decreases the expected path-length (or some other criterion), then it is clear that this, strictly monotone, potential can not drop indefinitely (or too often) and thus, a (quick) convergence is guaranteed.
A placement $\varphi$ is said to be {\bf \emph{local minimum}} (or local optimum) if and only if no node in $G$ can switch according to the rule they operate (i.e., M-rule or C-rule). When using the C-rule, we can prove the following about the local minimum placement.

\begin{lemma}\label{lem:monoton}
Any local minimum placement $\varphi$ with respect to the C-rule, is \emph{center monotone}: If there is a path of switching directions from $\varphi(u)$ to $\varphi(v)$ that is distance-decreasing with respect to $c^*$, i.e., a path for which every step is a switching direction and goes strictly closer to $c^*$, then $\p(u) \leq \p(v)$.
\end{lemma}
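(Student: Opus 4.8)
The plan is to reduce the statement to a single, purely local computation: the change in the center objective $C$ caused by switching two adjacent nodes, one of which lies strictly closer to $c^*$. First I would fix the local-minimum placement $\varphi$ together with its center $c^*=c^*(\varphi)$ and treat $c^*$ as a fixed reference point, since this is exactly the vertex against which the C-rule measures distances. Consider any edge $ab\in E(H)$ with $\d(a,c^*)=\d(b,c^*)+1$, so that $b$ is one step closer to the center, and let $x=\varphi^{-1}(a)$, $y=\varphi^{-1}(b)$ be the nodes currently occupying these two positions. Switching $x$ and $y$ alters only their two contributions to $C$, so I would compute the resulting change as
\begin{align*}
C(\varphi')-C(\varphi)&=\big(\p(x)\d(b,c^*)+\p(y)\d(a,c^*)\big)-\big(\p(x)\d(a,c^*)+\p(y)\d(b,c^*)\big)\\
&=\big(\p(x)-\p(y)\big)\big(\d(b,c^*)-\d(a,c^*)\big)\\
&=\p(y)-\p(x).
\end{align*}
Invoking the local-minimum hypothesis that no switch may strictly decrease $C$ then yields $\p(y)-\p(x)\ge 0$; in words, along every edge of $H$ the node occupying the endpoint nearer to $c^*$ carries at least as much probability as the node at the farther endpoint.

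The second step is to chain this edge-wise inequality along the given path. I would write the distance-decreasing path from $\varphi(u)$ to $\varphi(v)$ as $\varphi(u)=a_0,a_1,\dots,a_k=\varphi(v)$, where each $a_{i+1}$ is adjacent to $a_i$ and strictly closer to $c^*$. Because distances to a fixed vertex change by at most one across an edge of an unweighted graph, strict decrease forces $\d(a_{i+1},c^*)=\d(a_i,c^*)-1$, so the edge inequality from the first step applies to every step and gives $\p(\varphi^{-1}(a_i))\le \p(\varphi^{-1}(a_{i+1}))$. Composing these along the path yields $\p(u)=\p(\varphi^{-1}(a_0))\le \p(\varphi^{-1}(a_k))=\p(v)$, which is precisely the claimed center monotonicity.

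The only delicate point is the treatment of the center. Switching two nodes perturbs the probability mass on $V(H)$ and could in principle move the minimizer $c^*$, so one must justify holding it fixed; the argument is legitimate because the C-rule, by definition, evaluates each candidate switch by its effect on the distance to the \emph{current} center, keeping $c^*$ fixed throughout the local comparison. Everything else is the one-line identity above together with the elementary fact about distances changing by exactly one along a strictly distance-decreasing edge. I would also remark that the argument nowhere uses that $H$ is a grid, so the monotonicity holds for an arbitrary host graph $H$.
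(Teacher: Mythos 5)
Your proof is correct and takes essentially the same route as the paper: the paper argues the contrapositive, extracting by induction an adjacent inverted pair $u',v'$ along the distance-decreasing path whose switch improves the C-rule objective, which is exactly your edge-wise inequality chained along the path, with the one-line algebra the paper dismisses as ``easy to see'' written out (and with the fixed-center subtlety handled correctly, since a strict decrease of the sum at the old center $c^*$ already forces a strict decrease of $C$ after re-minimizing over centers). One minor remark: your identity only uses the \emph{sign} of $\d(b,c^*)-\d(a,c^*)$, so there is no need to force the distance to drop by exactly one per step --- dropping that restriction lets the lemma apply verbatim to the longer ``chess knight move'' switching directions invoked later in Theorem~\ref{thm:approx-grid}, where a step of the path need not be an edge of $H$.
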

\begin{proof}
Assume for sake of contradiction that $\varphi$ is a local minimum, that $p(u) > p(v)$ and that there is a distance-decreasing path $P$ of switching directions from $\varphi(u)$ and to $\varphi(v)$. By induction there has to be two nodes $u'$ and $v'$ such that $\varphi(u')$ and $\varphi(v')$ are neighbors on the path $P$ with $p(u') > p(v')$ but $v'$ is closer to $c^*$ then $u'$. By assumption it is possible to switch $u'$ and $v'$ and it is easy to see that this is an improvement with regards to the C-rule. This contradicts the assumption that $\varphi$ is a local minimum. 
\end{proof}

Note that according to the C-rule, two neighbors switch locations only if the switch decreases C: the expected distance to the center.
The improvement of the switch can be found  locally, since the center location can be computed locally at each node via the  expected position of its requests (which are identical to all nodes because of the product distribution). Therefore, the C-rule will greedily minimize, for each node, the distance to the expected position of its requests. 


Throughout the rest of this paper we assume that the system converges against a local minimum and analyze the performance of such a solution in this stable state. On the other hand, we do NOT assume anything about the starting position OR the specific order of the dynamics (node switches). Thus, in many cases, an initially random starting position converges (e.g., using random improving switches) to a (near) optimal solution; we make no such assumptions and assume a worst case sequence of improvements and a worst-case initialization.

\subsection{The Line - Linear placement}
First, we study a greedy local switch strategy on a $1$-dimensional grid - the line.
We assume that the C-rule switching strategy is sequentially applied (in arbitrary order) on an arbitrary initial state and continuously adjust the network by switching neighbors. The strategy will converge against a local optimum from which no switch of two neighboring nodes improves the objective value in expectation.
We are interested in quantifying how far such a locally optimal solution can be from the global optimum. The following theorem gives an answer for this question.

\begin{theorem}\label{thm:approx-line}
Let $H$ be the line and $\Pc$ a symmetric product distribution, then any locally optimal solution achieved by the C-rule (or M-rule) is at most a factor of four larger than the global optimum of MEPL.
\end{theorem}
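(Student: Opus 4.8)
The plan is to chain together the inequalities already in hand and reduce everything to one structural claim about the local optimum. By Lemma~\ref{lem:epl-c} we have $\epl(\varphi)\le 2C(\varphi)$ for any locally optimal placement $\varphi$, and by the Corollary $\mepl\ge C_{\min}$. Hence it suffices to prove that $C(\varphi)\le 2C_{\min}$: combining the three gives $\epl(\varphi)\le 2C(\varphi)\le 4C_{\min}\le 4\,\mepl$, which is exactly the claimed factor. So the whole problem becomes controlling the expected distance to the center of a locally optimal placement against the best possible expected distance to the center, and this is where the expected rank $\E[R]$ enters as the common yardstick for both quantities.

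First I would lower bound $C_{\min}$ by $\tfrac12\E[R]$. On the line any center has at most one node at distance $0$ and at most two nodes at each distance $j\ge1$, so in any placement the $r$-th smallest distance (counting from $0$) is at least $\ceil{r/2}$. Pairing the largest activities with the smallest distances (rearrangement) then shows, for every placement and center, $\sum_u \p(u)\,\d(\varphi(u),x)\ge \sum_u \p(u)\ceil{\r(u)/2}\ge \tfrac12\sum_u\p(u)\r(u)=\tfrac12\E[R]$, and minimizing over placements yields $C_{\min}\ge\tfrac12\E[R]$.

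Next I would upper bound $C(\varphi)$ by $\E[R]$ using center monotonicity. By Lemma~\ref{lem:monoton} the local optimum is center monotone, which on the line means the activities are non-increasing as one walks away from $c^*$ on either side. Consequently, if a node $u$ sits at distance $d$ from $c^*$, all $d$ nodes strictly between $u$ and $c^*$ on $u$'s side have activity at least $\p(u)$, so $\r(u)\ge d$; equivalently every node of rank at most $t$ lies within distance $t$ of $c^*$. Writing both quantities as tail sums, $C(\varphi)=\sum_{t\ge0}\mu\{\d>t\}$ and $\E[R]=\sum_{t\ge0}\mu\{\r>t\}$ where $\mu$ denotes activity mass, the inclusion $\{\r\le t\}\subseteq\{\d\le t\}$ gives $\mu\{\d>t\}\le\mu\{\r>t\}$ term by term, hence $C(\varphi)\le\E[R]\le 2C_{\min}$, closing the reduction.

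The step I expect to be the main obstacle is making the rank--distance comparison airtight and extending it to the M-rule. Ties in the activities make the bound $\r(u)\ge d$ fragile, but since $\E[R]$ is invariant under the choice of tie-break I would fix the tie-break so that nodes closer to $c^*$ receive the smaller rank, which restores $\r(u)\ge d$ exactly. For the M-rule I would separately verify that its local optima are center monotone as well: a direct computation of the change in $\epl$ from swapping two adjacent nodes $u,v$ on the line factors as $(\p(u)-\p(v))$ times the difference between the mass lying beyond $u$ and the mass lying beyond $v$, and since that mass-difference is monotone along the line, the no-improving-swap condition forces the same unimodal, center-monotone profile, so the identical rank argument applies.
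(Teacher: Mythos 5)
Your proposal is correct and takes essentially the same route as the paper: the same sandwich $C(\varphi)\le\epl(\varphi)\le 2C(\varphi)$ from Lemma~\ref{lem:epl-c}, the same center-monotonicity argument (Lemma~\ref{lem:monoton}) giving $C(\varphi)\le\E[R]$ at any local optimum, and the same at-most-two-nodes-per-distance count giving $C_{\min}\ge\tfrac{1}{2}\E[R]$, combined into the factor $4$. Your additions -- fixing the rank tie-break explicitly, phrasing the lower bound via rearrangement so it holds for every placement, and the swap-delta factorization $(\p(u)-\p(v))$ times the left/right mass difference for the M-rule -- are sound refinements of points the paper glosses over (it proves only the C-rule case, remarking the M-rule ``could be done similarly''), not a different proof strategy.
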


We prove this theorem for the C-rule but this could be done similarly to the M-rule. Assume $H$ and $\Pc$ as in the theorem.
We first give an upper bound on the expected path length achieved by the C-rule in terms of the expected rank of the distribution (Definition~\ref{dfn:rank}).

\begin{lemma}\label{lemma:upper_bound_line}
For any locally optimal solution $\varphi$ achieved by the C-rule on $C(\varphi) \le \E[R]$ and $\epl(\varphi)$ is at most $2\E[R]$.
\end{lemma}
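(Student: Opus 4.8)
The plan is to reduce both claimed bounds to a single inequality and then prove that inequality from the geometric structure that Lemma~\ref{lem:monoton} forces on the line. Since Lemma~\ref{lem:epl-c} already gives $\epl(\varphi)\le 2C(\varphi)$ for \emph{every} placement, the bound $\epl(\varphi)\le 2\E[R]$ is an immediate consequence once $C(\varphi)\le\E[R]$ is known. So the entire argument concentrates on showing $C(\varphi)\le\E[R]$ for a placement $\varphi$ that is locally optimal under the C-rule.

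First I would extract the geometric consequence of local optimality. By Lemma~\ref{lem:monoton}, $\varphi$ is center monotone. On the line, taking one step toward the center $c^*$ is always a distance-decreasing switching direction, so along each of the two rays emanating from $c^*$ the activity $\p$ is non-increasing with distance; in particular $c^*$ carries the largest activity, and for every threshold $\tau$ the super-level set $\{u:\p(u)\ge\tau\}$ is exactly $c^*$ together with a prefix of $a$ consecutive positions on one side and a prefix of $b$ consecutive positions on the other. Writing $d_u=\d(\varphi(u),c^*)$, such a set has total distance $\sum_{i=1}^{a} i+\sum_{i=1}^{b} i=\binom{a+1}{2}+\binom{b+1}{2}$.

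Next I would set up a comparison by summation by parts. List the nodes as $u_0,u_1,\dots,u_{n-1}$ in the order defining the ranks (Definition~\ref{dfn:rank}), so that $\p(u_0)\ge\p(u_1)\ge\cdots$ and $\r(u_i)=i$; then $\E[R]=\sum_i i\,\p(u_i)$ while $C(\varphi)=\sum_i d_{u_i}\,\p(u_i)$. Setting $a_i=i-d_{u_i}$ and $b_i=\p(u_i)$ (nonnegative and non-increasing), Abel summation yields $\E[R]-C(\varphi)=\sum_i a_i b_i = A_{n-1}b_{n-1}+\sum_{i} A_i(b_i-b_{i+1})$, where $A_k=\sum_{i\le k}(i-d_{u_i})=\binom{k+1}{2}-\sum_{i\le k} d_{u_i}$. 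Since $b_i-b_{i+1}\ge 0$, it suffices to check $A_k\ge 0$ only at those indices $k$ where the probability strictly drops ($b_k>b_{k+1}$); at every other index the factor $b_i-b_{i+1}$ vanishes and the term contributes nothing.

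The crucial observation closes the argument: at a strict-drop index $k$ the set $\{u_0,\dots,u_k\}$ is precisely a super-level set $\{u:\p(u)\ge b_k\}$, hence by the prefix structure above it is $c^*$ plus an $a$-prefix and a $b$-prefix with $a+b=k$. Thus $\sum_{i\le k} d_{u_i}=\binom{a+1}{2}+\binom{b+1}{2}$, and the elementary identity $\binom{a+b+1}{2}-\binom{a+1}{2}-\binom{b+1}{2}=ab$ gives $A_k=ab\ge 0$. Every surviving Abel term is therefore nonnegative, so $\E[R]-C(\varphi)\ge 0$, and the lemma follows through Lemma~\ref{lem:epl-c}. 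The main obstacle I anticipate is handling ties: pointwise one need \emph{not} have $d_u\le\r(u)$, since an adversarial tie-break can assign a small rank to a far node, so the comparison cannot be made term by term. This is exactly why the argument must be aggregated by summation by parts and restricted to the strict-drop indices, where the top set is \emph{forced} to be a center prefix and the clean binomial bound applies.
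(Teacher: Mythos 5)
Your proof is correct, but it takes a genuinely different route from the paper's. The paper argues pointwise: by Lemma~\ref{lem:monoton}, every node occupying a position between $\varphi(v)$ and $c^*$ on the line has probability at least $\p(v)$, hence $\d(\varphi(v),c^*)\le\r(v)$ for every $v$, and summing against $\p(v)$ gives $C(\varphi)\le\E[R]$ in two lines; the factor $2$ then comes from Lemma~\ref{lem:epl-c} exactly as in your reduction. You instead aggregate via Abel summation, checking $A_k\ge 0$ only at strict-drop thresholds, where the top-$(k{+}1)$ set is forced to be a super-level set consisting of $c^*$ plus two contiguous prefixes, and the identity $\binom{a+b+1}{2}-\binom{a+1}{2}-\binom{b+1}{2}=ab\ge 0$ closes the bound; all steps check out, including the boundary term $A_{n-1}b_{n-1}$ (the whole vertex set is itself a super-level set, so $A_{n-1}=ab\ge 0$ as well). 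The tie issue you flag is real as stated: Lemma~\ref{lem:monoton} yields only weak inequalities, and since Definition~\ref{dfn:rank} breaks ties arbitrarily, an adversarial tie-break can indeed violate $\d(\varphi(v),c^*)\le\r(v)$ pointwise — a subtlety the paper's proof silently glosses over. What your machinery buys is tie-robustness and, as a bonus, an exact expression for the slack $\E[R]-C(\varphi)$ as a nonnegative combination of $ab$-terms. What it costs is length, because the same robustness is available in one line: $\E[R]=\sum_{u}\p(u)\r(u)$ is invariant under the choice of tie-break (permuting ranks within a block of equal probabilities does not change the sum), so one may assume without loss of generality that nodes of equal probability are ranked in order of increasing distance to $c^*$, after which the paper's pointwise bound holds verbatim.
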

\begin{proof}
Let $\d(\varphi(v),c^*)$ be the distance of $\varphi(v)$ from $c^*$. We want to bound it in terms of $r_v$, the rank of $v$.
From Lemma~\ref{lem:monoton} we get that all nodes between $\varphi(v)$ and $c^*$ on the line have higher probability than $v$ and thus $\d(\varphi(v),c^*) \le r_v$.
\begin{align*}
C(\varphi)&=\sum_{v\in V(G)}\p(v)\d(\varphi(v),c^*)\le \sum_{v \in V(G)} \p(v) \r(v)=\E[R]
\end{align*}
From Lemma~\ref{lem:epl-c}, $\epl(\varphi) \le 2C(\varphi) \le 2\E[R]$.
\end{proof}
We now prove a lower bound for MEPL on the line and any symmetric product distribution of requests.
\begin{lemma}\label{lem:1-D-lower}
$\mepl \ge C_{\min} \ge \frac{1}{2}\E[R]$.
\end{lemma}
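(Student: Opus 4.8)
The first inequality, $\mepl \ge C_{\min}$, is exactly the corollary to Lemma~\ref{lem:epl-c}, so all of the work goes into proving $C_{\min} \ge \frac{1}{2}\E[R]$. Since $\E[R]$ depends only on $\Pc$ and not on the placement, the plan is to prove the stronger uniform statement that \emph{every} placement $\varphi$ and \emph{every} candidate center position $x$ satisfy $\sum_{u \in V(G)} \p(u)\,\d(\varphi(u),x) \ge \frac{1}{2}\E[R]$. Minimizing the left-hand side first over $x$ (which gives $C(\varphi)$) and then over $\varphi$ (which gives $C_{\min}$) then yields the claim.

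The only geometric input specific to the line is the following counting fact: around any position $x$ there are at most $2k+1$ positions at distance at most $k$ from it (one at distance $0$, and at most two at each distance $1,\dots,k$, one on each side). Consequently, no matter how nodes are placed, at most $2k+1$ of them can lie within distance $k$ of the center. I would then sort the nodes by decreasing probability as $u_0,u_1,\dots,u_{n-1}$ with $\p(u_0)\ge \dots \ge \p(u_{n-1})$, so that $\r(u_j)=j$, and abbreviate $d_j = \d(\varphi(u_j),x)$. The layer-cake identity
\[
\sum_{j} \p(u_j)\, d_j \;=\; \sum_{k \ge 0}\Bigl(1 - \sum_{j:\, d_j \le k} \p(u_j)\Bigr)
\]
reduces everything to upper-bounding the probability mass lying within each distance $k$. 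Because at most $2k+1$ of the $d_j$ are $\le k$ and the probabilities are sorted in decreasing order, that mass is at most $\p(u_0)+\dots+\p(u_{2k})$, so the $k$-th summand is at least $\sum_{i>2k}\p(u_i)$. Re-summing gives $\sum_{j} \p(u_j)\, d_j \ge \sum_{j}\p(u_j)\ceil{j/2} \ge \frac{1}{2}\sum_{j}\p(u_j)\,\r(u_j) = \frac{1}{2}\E[R]$, using $\ceil{j/2}\ge j/2$ in the last step.

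The one place that needs care is the sorting/bounding step: the inequality $\sum_{j:\,d_j\le k}\p(u_j) \le \p(u_0)+\dots+\p(u_{2k})$ is exactly the statement that, subject to the constraint ``at most $2k+1$ nodes within distance $k$'', the weighted distance sum is smallest when the largest probabilities are assigned to the smallest distances, i.e.\ a rearrangement/majorization argument. I expect this to be the main (though routine) obstacle, while the rest is a direct computation. It is also worth noting that the factor $\tfrac12$ traces back precisely to the two sides of the line, i.e.\ to the ``$2$'' in the bound $2k+1$; the same scheme will generalize to the $d$-dimensional grid with the counting bound $2k+1$ replaced by the number of grid points within $\ell_1$-distance $k$.
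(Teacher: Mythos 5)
Your proof is correct, and it takes a genuinely different route from the paper's. The paper argues only about the optimizer $\varphi^*$: by an exchange argument (if $\p(w)>\p(v)$ but $w$ were placed farther from $c^*$ than $v$, swapping the two would strictly decrease $C$, contradicting optimality), every node ranked above $v$ must lie within distance $\d(\varphi^*(v),c^*)$ of the center; since a line has at most two positions at each distance $i\ge 1$, this yields the pointwise bound $\r(v)\le 2\,\d(\varphi^*(v),c^*)$ \emph{for the optimal placement}, and multiplying by $\p(v)$ and summing finishes. You instead prove the stronger uniform statement $\sum_u \p(u)\,\d(\varphi(u),x)\ge\tfrac12\E[R]$ for \emph{every} placement and every candidate center, with no appeal to optimality, by combining the layer-cake identity with the same counting fact (at most $2k+1$ positions within distance $k$). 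This is a real difference in logical scaffolding: the pointwise inequality $\r(v)\le 2\,\d(\varphi(v),c^*)$ is simply false for arbitrary placements (put the lowest-activity node at the center), so your layer-cake aggregation is exactly the device that lets you trade the optimality hypothesis for a bound that holds only in aggregate. Incidentally, the step you flag as the main obstacle is immediate and needs no majorization machinery: since $\p(u_0)\ge\p(u_1)\ge\cdots$, the probability mass on any index set $S$ is at most $\sum_{j=0}^{|S|-1}\p(u_j)$, and $|\{j:\ d_j\le k\}|\le 2k+1$ because $\varphi$ is injective. As for what each approach buys: the paper's exchange argument is shorter and parallels the monotonicity property it uses elsewhere (Lemma~\ref{lem:monoton}), while yours gives the stronger conclusion $C(\varphi)\ge\tfrac12\E[R]$ for all $\varphi$ (and even the slightly sharper $\sum_j\p(u_j)\ceil{j/2}$), and, as you note, it transfers verbatim to higher dimensions by replacing $2k+1$ with the $\ell_1$-ball size, giving an optimality-free proof of the analogous grid lower bound (Lemma~\ref{lem:c-rule-lower}) as well.
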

\begin{proof}
Let $\varphi^*$ be the placement  such that $C_{\min} = C(\varphi^*)$. Note that by definition, $\varphi^*$ minimizes the expected path length to the center. Given the center $c^*(\varphi^*)$ and an arbitrary node $v$ with a distance $\d(\varphi^*(v),c^*)$ from $c^*$, we want to find an upper bound on the rank of $v$ by bounding how many nodes can have a larger activity level than $v$. Clearly, all such nodes will be at most at the distance $\d(\varphi^*(v),c^*)$ from the center, since otherwise, $\varphi^*$ will not be optimal.
Since in a line there at most two nodes at distance $i$ from the center $\d(\varphi^*(v),c^*) \ge \r(v)/2$ we obtain as desired: $C_{\min}=\sum_{v\in V}\p(v)\d(\varphi^*(v),c^*)\ge\sum_{v}\p(v) \r(v)/2=\E[R/2]$.
\end{proof}

To conclude the proof of Theorem~\ref{thm:approx-line}, we combine Lemma~\ref{lem:epl-c}, Lemma~\ref{lemma:upper_bound_line} and Lemma~\ref{lem:1-D-lower} to get that for a local minimum $\varphi$:
\begin{align}
2\E[R] \ge \epl(\varphi) \ge \mepl \ge \frac{1}{2}\E[R]
\end{align}
Thus, the ratio between the worst case local solution and the optimal solution is at most $4$.  

\subsection{The d-Dimensional Grid}

In this section, we extend the ideas from the line to grid networks. Our results apply readily to grids of arbitrary dimension but, for sake of simplicity, we stick to two dimensions here.
We first show that using the same greedy approach as in the line, namely switching neighboring nodes using the M-rule, leads to a drastically worse ratio between local and global minima.

\begin{lemma}
On the $d$-dimensional grid, there is a local minimum with regards to the M-rule and the C-rule that is a factor of $\Omega(n^{1/d - 1/d^2})$ worse than the global minimum.
\end{lemma}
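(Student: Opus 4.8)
The plan is to exhibit a single hard instance---a symmetric product distribution together with a placement---that is simultaneously a local minimum for both the M-rule and the C-rule yet is $\Omega(n^{1/d-1/d^2})$ away from optimal. Write $n=k^d$, so $k=n^{1/d}$, and work in the grid $[k]^d$. I would make exactly $k$ nodes \emph{active}, each with activity level $\p(u)=1/k$, and give all remaining $n-k$ nodes activity $0$; this is a valid symmetric product distribution since $\p(u,v)=\p(u)\p(v)$. The bad placement $\varphi$ lays the $k$ active nodes out as a \emph{needle}: consecutively along a single coordinate axis, at positions $(0,0,\dots,0),(1,0,\dots,0),\dots,(k-1,0,\dots,0)$, with the inactive nodes filling the rest of the grid arbitrarily.

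Next I would compute the two objective values. Along the needle the expected center $c^*$ is the midpoint, and the $k$ active points are equally spaced and collinear, so both the average pairwise $\ell_1$-distance and the expected distance to the center are $\Theta(k)$; hence $\epl(\varphi)=\Theta(k)=\Theta(n^{1/d})$. For the upper bound on the optimum I would pack the same $k$ active nodes into a grid ball of radius $\Theta(k^{1/d})$ (which fits, since $k^{1/d}\le k$), inside which all pairwise distances are $O(k^{1/d})$; this gives $\mepl=O(k^{1/d})=O(n^{1/d^2})$. Dividing,
\[
\frac{\epl(\varphi)}{\mepl}=\Omega\!\left(k^{\,1-1/d}\right)=\Omega\!\left(n^{1/d-1/d^2}\right),
\]
which is the claimed gap. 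Note that for $d=1$ this degenerates to a constant, consistently with Theorem~\ref{thm:approx-line}.

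The heart of the argument is to verify that $\varphi$ is genuinely a local minimum for \emph{both} rules, i.e.\ that no swap of two placed neighbors strictly decreases the objective. I would split into cases by activity. A swap of two equally-active nodes (both active, or both inactive) leaves every distance-weighted term unchanged, so the objective is unaffected. The only remaining case is swapping an active needle-node $u$ with an inactive neighbor $w$. The along-axis neighbors of any needle-node are themselves active (or lie outside the grid at the two ends), so the only inactive neighbors sit \emph{off} the needle, at an $\ell_1$-distance-one position obtained by raising one coordinate $j\ge 2$ from $0$ to $1$. Since the center and every other active node have $j$-th coordinate $0$, moving $u$ to $w$'s slot strictly increases its distance to the center (killing any C-rule improvement) and strictly increases its distance to \emph{every} other active node (killing any M-rule improvement); concretely each such term grows by exactly $1$. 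Thus every available swap is non-improving, and because both rules only switch on a \emph{strict} decrease, the zero-change (active--active) swaps are never taken either, so $\varphi$ is a fixed point of both dynamics.

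The step I expect to require the most care is C-rule local-minimality, because the expected center $c^*(\varphi)$ (equivalently the expected request position each node greedily moves toward) itself shifts when a swap is performed, so I cannot simply treat $c^*$ as fixed while bounding the change in $C$. I would resolve this through the coordinate-monotonicity above: an off-needle move raises a coordinate that is minimal for the entire active mass, so it strictly increases $C$ regardless of the induced first-order shift of the center, while along-needle and endpoint moves never lower $C$. A secondary point worth stating explicitly is that the construction uses genuinely zero activities; if one prefers a fully non-degenerate distribution, the same needle remains a strict local minimum after replacing the $0$'s by a sufficiently small $\eps$, at the cost of routine perturbation bookkeeping that does not affect the asymptotic gap.
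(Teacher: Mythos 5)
Your proposal is correct and follows essentially the same route as the paper's proof: the identical instance of $n^{1/d}$ uniformly active nodes (activity $n^{-1/d}$) placed as a needle along one axis, the same case analysis showing equal-activity swaps are neutral and active--inactive swaps strictly increase both objectives by one per affected pair, and the same ball-packing argument giving an alternative placement of cost $O(n^{1/d^2})$. If anything, you are more careful than the paper on the C-rule subtlety that the center itself may shift after a swap (which you settle via coordinate-wise decomposability of the $\ell_1$ objective), a point the paper passes over with a one-line assertion.
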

\begin{proof}
We take $n^{1/d}$ nodes with uniform probability $p = n^{-1/d}$ and set all other nodes to probability zero. We arrange the active nodes on a line along one dimension as an initial placement. We now note that this initial placement is a local minimum since all switches are non-improving with regards to the M-rule or C-rule and also lead to a higher expected path length. This is true because any switch between nodes in the line does not change the expected path length while switching a node from the line with an inactive node only increases the path length for the active node by one while the reduced path length for the inactive node has no effect on the expected path length. The same is true for the expected distance to the center. The expected path length of this linear arrangement is $n^{1/d}/2$. To complete the proof we will now argue that there are arrangements with EPL of order $O(n^{1/d^2})$. One such (close to optimal) solution can be achieved by arranging all active nodes within a ball around the center. Since the number of nodes in a ball in $d$ dimensions grows as $r^d$ with the radius $d$ it is clear that a radius of $r = O(n^{1/d^2})$ suffices to contain enough spots to place the $n^{1/d}$ active nodes in. The maximum (and also expected) path length within the ball is also within a constant factor of the radius, i.e., also $O(n^{1/d^2})$. The ratio between the local minimum and the alternative ball arrangement is thus $\Omega(n^{1/d - 1/d^2})$ proving that the ratio between the worst local minimum compared to the global minimum is at least this large. 
\end{proof}


Note that the last lemma implies a $\Omega(n^{1/4})$ worst-case ratio for the $2$-dimensional grid. Surprisingly, we can avoid this, locally stable but highly suboptimal solution, by allowing only slightly longer switches. The rule we propose is that a node can also switch with any of the neighbors in $\ell_1$-distance three that differs two in one axis and one in the other (similar to the chess knight moves) and the switching is according to the C-rule. In this case we can prove the following.

\begin{theorem}\label{thm:approx-grid}
Let $H$ be the $2$-dimensional grid and $\Pc$ a symmetric product distribution, then any locally optimal solution achieved by the C-rule (and allowing "chess knight move" switches) is at most a factor of $4.62$ larger than the global optimum of MEPL.
\end{theorem}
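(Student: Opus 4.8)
The plan is to mirror the three-part argument used for the line (Theorem~\ref{thm:approx-line}): bound the expected path length of a local optimum from above by a rank-type quantity, bound $\mepl$ from below by the same quantity, and divide. The essential difference is dimensional. On the line a node at distance $\delta$ from the center forces only $\Theta(\delta)$ nodes to be nearer, whereas in the plane an $\ell_1$-ball of radius $\delta$ holds $\Theta(\delta^2)$ nodes, so the correct invariant is no longer the expected rank $\E[R]$ but its square-root version $\E[\sqrt R]:=\sum_{u}\p(u)\sqrt{\r(u)}$, and every distance bound will read $\delta\approx\sqrt{\r}$ rather than $\delta\approx\r$. First I would restate and reprove center monotonicity (Lemma~\ref{lem:monoton}) with the enlarged move set: a \emph{switching direction} is now either a grid edge or a knight move, and the same exchange argument shows that in a C-rule local optimum every node reachable from $\varphi(v)$ by a strictly distance-decreasing path of such directions has probability at least $\p(v)$.

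For the upper bound (the analog of Lemma~\ref{lemma:upper_bound_line}) I would fix a node $v$ with $\d(\varphi(v),c^*)=\delta$ and count how many nodes are forced to have probability $\ge\p(v)$, i.e.\ the size of the monotone reachable region of $\varphi(v)$. This is the step where the knight moves earn their keep: with grid moves alone the reachable region can degenerate to a thin line (precisely the configuration behind the $\Omega(n^{1/4})$ lower bound of the preceding lemma), forcing only $\delta$ nearer nodes. A clean way to see that the knight moves fatten this region into a genuine two-dimensional cone is to rotate coordinates by $45^\circ$, writing $u=x+y$ and $w=x-y$; then every useful distance-decreasing move decreases $u$ by one while shifting $w$ by an odd amount in $\{-3,-1,1,3\}$, so within the octant $\{|w|\le u\}$ the reachable set is $\{(u,w):|w|\le u,\ |w-w_0|\le 3(\delta-u)\}$. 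Integrating, this octant alone contributes $\tfrac{1}{8}(3\delta^2-w_0^2)$ lattice points; combining it with the symmetric contributions of the neighbouring octants and minimizing over the orientation $w_0$ yields a guaranteed region of at least $\tfrac38\delta^2$ nodes in the worst case. Hence $\r(v)\ge\tfrac38\delta^2$, i.e.\ $\delta\le\sqrt{8\,\r(v)/3}$, and summing against $\p(v)$ gives $C(\varphi)\le\sqrt{8/3}\,\E[\sqrt R]$, so by Lemma~\ref{lem:epl-c} $\epl(\varphi)\le 2\sqrt{8/3}\,\E[\sqrt R]$.

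For the lower bound (the analog of Lemma~\ref{lem:1-D-lower}) I would take the center-optimal placement $\varphi^*$: if $v$ sits at distance $\delta$ from $c^*$ then every higher-activity node must lie within distance $\delta$ (else a swap would reduce $C$), and an $\ell_1$-ball of radius $\delta$ contains only $2\delta^2+2\delta+1$ nodes, so $\r(v)\le 2\delta^2+2\delta$ and $\delta\ge\sqrt{\r(v)/2}$ up to lower-order terms. This gives $C_{\min}\ge\tfrac{1}{\sqrt2}\E[\sqrt R]$ and hence $\mepl\ge\tfrac{1}{\sqrt2}\E[\sqrt R]$. Dividing the two estimates, the distribution-dependent factor $\E[\sqrt R]$ cancels and the ratio is at most $\bigl(2\sqrt{8/3}\bigr)\big/\bigl(1/\sqrt2\bigr)=8/\sqrt3\approx 4.62$, as claimed; the extension to general $d$ replaces the packing constants $3/8$ and $2$ by their $d$-dimensional counterparts.

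The step I expect to be the real obstacle is the geometric one in the second paragraph: pinning down the worst-case orientation $w_0$ and proving that, across all orientations, the monotone reachable region never drops below $\tfrac38\delta^2$ nodes. The difficulty is that the reachable set is not a single clean cone — it spills across the coordinate axes into neighbouring octants, where the $\ell_1$ distance function has kinks, so the bound $|w-w_0|\le 3(\delta-u)$ must be stitched together octant-by-octant and the corner at $u=|w|$ handled with care (note that a single octant alone only guarantees $\tfrac14\delta^2$ at the axis-aligned orientation, so the neighbouring contributions are genuinely needed). A secondary nuisance is bookkeeping the lower-order $O(\delta)$ terms and the parity constraint $u\equiv w\pmod 2$ so that the clean constants $3/8$ and $2$, and hence exactly $8/\sqrt3$, survive; I would dispatch these by arguing in the large-rank regime where they are negligible.
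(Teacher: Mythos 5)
Your proposal follows essentially the same route as the paper's proof: extend center monotonicity (Lemma~\ref{lem:monoton}) to the knight-move switch set, lower-bound the monotone region of a node at distance $\delta$ from $c^*$ by $\tfrac{3}{8}\delta^2$ nodes to get $C(\varphi)\le\tfrac{4}{\sqrt{6}}\E[\sqrt{R}]$ (your $\sqrt{8/3}=4/\sqrt{6}$), lower-bound $\mepl\ge C_{\min}\ge\tfrac{1}{\sqrt{2}}\E[\sqrt{R}]$ by counting the $4i$ nodes at distance $i$, and combine via Lemma~\ref{lem:epl-c} to obtain $8/\sqrt{3}\approx 4.62$. The only real difference is bookkeeping: you count the monotone region in $45^\circ$-rotated coordinates octant-by-octant, while the paper decomposes the same polygon into a rectangle plus four triangles ($S_1,\dots,S_5$ with the triangle-count function $A(x)$), arriving at the identical constant $\tfrac{6}{16}(a+b)^2=\tfrac{3}{8}\delta^2$.
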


The proof of this theorem is similar in spirit to the $1$-dimensional grid, where we provide bounds on $C$ for the optimal and locally optimal placements.
First, we show that we get a fat set from this strategy and also prove a stronger rank-property of any locally optimal solution; namely, nodes that are far away from the center have to have a (quadratically) high rank. 


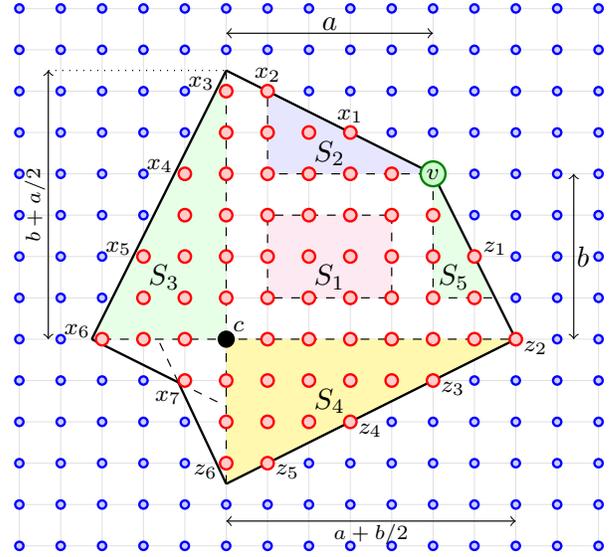
\begin{figure}
\begin{center}
\scalebox{1.1}{\begin{tikzpicture}
[inner sep=0.3mm, place/.style={circle,draw=green!50!black,fill=green!20,thick,minimum size=3mm},place1/.style={circle,draw=magenta,fill=green!20,thick,minimum size=3mm},place2/.style={circle,draw=red,fill=red!20,thick,minimum size=1.5mm},place3/.style={circle,draw=blue,fill=blue!20,thick,minimum size=1mm},place4/.style={circle,draw=black,fill=black,thick,minimum size=1.8mm},scale=1]

\draw[very thin,color=gray,scale=0.5,opacity=0.25] (2,1) grid (16,14);

\foreach \y in {0.5,1,...,7}{
\foreach \x in {1,1.5,...,8}{
\node at (\x,\y) [place3] {\scriptsize{$$}};
}}

\filldraw [draw=black,dashed,fill=blue!10] (6,5)--(4,5+1)--(4,5)--(6,5);
\filldraw [draw=black,dashed,fill=green!10] (6,5)--(6,3.5)--(6.75,3.5);
\filldraw [draw=black,dashed,,fill=yellow!40] (3.5,3-1.75)--(3.5,3)--(7,3);
\filldraw [draw=black,dashed,,fill=green!10] (3.5,5+1.25)--(3.5,3)--(3.5-1.625,3);
\filldraw [draw=black,dashed,fill=magenta!10] (5.5,4.5)--(5.5,3.5)--(4,3.5)--(4,4.5)--(5.5,4.5);

\node at (4.75,3.75) {$S_1$};
\node at (4.75,5.25) {$S_2$};
\node at (2.75,3.75) {$S_3$};
\node at (4.75,2.25) {$S_4$};
\node at (6.25,3.75) {$S_5$};

\node at (5,5.7) {\footnotesize{$x_1$}};
\node at (4,6.2) {\footnotesize{$x_2$}};
\node at (3.2,6.1) {\footnotesize{$x_3$}};
\node at (2.7,5.1) {\footnotesize{$x_4$}};
\node at (2.2,4.1) {\footnotesize{$x_5$}};
\node at (1.7,3.1) {\footnotesize{$x_6$}};
\node at (2.8,2.3) {\footnotesize{$x_7$}};
\node at (3.25,1.4) {\footnotesize{$z_6$}};
\node at (4.24,1.4) {\footnotesize{$z_5$}};
\node at (5.24,1.9) {\footnotesize{$z_4$}};
\node at (6.24,2.4) {\footnotesize{$z_3$}};
\node at (7.24,2.9) {\footnotesize{$z_2$}};
\node at (6.75,4.1) {\footnotesize{$z_1$}};

\draw [<->] (6,6.7)-- node[above] {$a$}(3.5,6.7);
\draw [<->] (7.7,5)-- node[right] {$b$}(7.7,3);
\draw [<->] (3.5,0.8)-- node[below] {\scriptsize{$a+b/2$}}(7,0.8);
\draw [<->] (1.35,3)-- node[above,rotate=90] {\scriptsize{$b+a/2$}}(1.35,6.25);
\draw [dotted] (1.35,6.25)-- (3.5,6.25);

\draw[thick] (6,5)--(3.5,5+1.25);
\draw[thick] (6,5)--(7,3);
\draw[thick] (7,3)--(3.5,3-1.75);
\draw[thick] (3.5,3-1.75)--(2.9217,2.47664);
\draw[dashed] (2.9217,2.47664)--(3.5-0.825,3);
\draw[thick] (3.5,5+1.25)--(3.5-1.625,3);
\draw[thick] (3.5-1.625,3)--(2.9217,2.47664);
\draw[dashed] (2.9217,2.47664)--(3.5,3-0.8125);


\node at (6,5) (v)[place] {\footnotesize{$v$}};

\foreach \y in {5,4.5,4,3.5,3}{
\foreach \x in {5.5,5,4.5,4,3.5}{
\node at (\x,\y) [place2] {\scriptsize{$$}};
}}

\node at (3.5,3) [place4] {\footnotesize{$$}};
\node at (3.65,3.15)  {\footnotesize{$c$}};

\foreach \y in {4.5,4,3.5,3}{
\node at (6,\y) [place2] {\scriptsize{$$}};
}

\foreach \x in {5,4.5,4,3.5}{
\node at (\x,5.5) [place2] {\scriptsize{$$}};
}

\foreach \y in {2.5,2,1.5}{
\node at (3.5,\y) [place2] {\scriptsize{$$}};
}

\foreach \x in {3,4,4.5,5,5.5,6}{
\node at (\x,2.5) [place2] {\scriptsize{$$}};
}

\foreach \x in {4,4.5,5}{
\node at (\x,2) [place2] {\scriptsize{$$}};
}

\foreach \x in {4}{
\node at (\x,1.5) [place2] {\scriptsize{$$}};
}

\foreach \y in {5,4.5,4,3.5,3}{
\node at (3,\y) [place2] {\scriptsize{$$}};
}

\foreach \y in {4,3.5,3}{
\node at (2.5,\y) [place2] {\scriptsize{$$}};
}

\node at (2,3) [place2] {\scriptsize{$$}};
\node at (6.5,3) [place2] {\scriptsize{$$}};
\node at (7,3) [place2] {\scriptsize{$$}};
\node at (6.5,3.5) [place2] {\scriptsize{$$}};
\node at (6.5,4) [place2] {\scriptsize{$$}};

\node at (4,6) [place2] {\scriptsize{$$}};
\node at (3.5,6) [place2] {\scriptsize{$$}};


\end{tikzpicture}}
\caption{\label{fig:local-min-center}
A local minimum when optimizing the distance to the expected position of a routing request (black node). The red nodes mark the positions that dominate the green node, that is, that have a path consisting of switch directions each going strictly closer to the black node. In a local minimum these nodes have to have a larger probability than the green node making the rank of the green node at least as large as the number of red nodes.}
\vspace{-0.4cm}
\end{center}
\end{figure}

\begin{lemma}\label{lemma:c-rule-upper}
For any local minimum of the C-rule $C(\varphi) \le  \frac{4}{\sqrt{6}} \E[\sqrt{R}]$, where $R$ is the rank of a given distribution (see Definition~\ref{dfn:rank}). 
\end{lemma}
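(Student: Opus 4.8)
The plan is to prove the statement node-by-node and then average, exactly as in the line case (Lemma~\ref{lemma:upper_bound_line}), but upgrading the \emph{linear} rank bound there to a \emph{quadratic} one. Writing $d_v=\d(\varphi(v),c^*)$ and noting that $\tfrac{4}{\sqrt6}=\sqrt{8/3}$, the target is equivalent to the per-node inequality $\r(v)\ge\tfrac38 d_v^{\,2}$: a node at $\ell_1$-distance $d$ from the expected center must have rank at least $\tfrac38 d^2$. Granting this, $d_v\le\sqrt{8/3}\,\sqrt{\r(v)}=\tfrac{4}{\sqrt6}\sqrt{\r(v)}$ for every $v$, and summing against the activity distribution gives $C(\varphi)=\sum_v\p(v)\,d_v\le\tfrac{4}{\sqrt6}\sum_v\p(v)\sqrt{\r(v)}=\tfrac{4}{\sqrt6}\E[\sqrt R]$, which is the claim. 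So everything reduces to the quadratic rank bound for a single node.

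For the per-node bound I would reuse center monotonicity (Lemma~\ref{lem:monoton}), now with the switch set enlarged by the knight moves. Fix $v$ and consider every grid position $w$ reachable from $\varphi(v)$ by a path of switching directions, each step of which moves strictly closer to $c^*$. By Lemma~\ref{lem:monoton} each such $w$ hosts a node of probability at least $\p(v)$; these are exactly the positions that \emph{dominate} $v$ in Figure~\ref{fig:local-min-center}. Since each dominating position is strictly closer to the center than $\varphi(v)$, it is distinct from $\varphi(v)$, so (under the tie-breaking convention of Definition~\ref{dfn:rank}) $\r(v)$ is at least the number of dominating positions. It therefore suffices to show that this number is at least $\tfrac38 d_v^{\,2}$ no matter where $v$ sits on its distance level.

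To count the dominating positions I would slice the $\ell_1$-ball into distance levels $\ell=0,1,\dots,d-1$ and track, on each level, the interval of positions reachable from $\varphi(v)$ by strictly distance-decreasing steps. Translating $c^*$ to the origin and placing $\varphi(v)=(a,b)$ with $a+b=d$, the decreasing moves are the two standard steps together with the knight steps, whose two extreme effects shift the lateral (say $x$-) coordinate by $+1$ and by $-2$ per unit drop in distance; these are the directions $(+1,-2)$ and $(-2,+1)$ that bound the fan in Figure~\ref{fig:local-min-center}, and precisely the moves the knight extension adds. Hence the reachable lateral interval widens as one descends, and summing its width over all levels (clipping at the axes, where the region continues into the neighbouring quadrant) gives the count. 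Carrying out this summation and minimizing over the position of $v$ on the level-$d$ diamond, the count is smallest when $v$ lies on the diagonal $a=b=d/2$, where the reachable region stays inside one quadrant and the level-widths sum to exactly $\tfrac38 d^2$; off the diagonal the extra spillover across an axis only increases the count. This yields $\r(v)\ge\tfrac38 d_v^{\,2}$.

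I expect the geometry of the dominating region to be the real work. The delicate points are getting the reachable set exactly right --- the clean fan inside one quadrant together with the reflected pieces that appear once $v$ approaches an axis (the regions $S_1,\dots,S_5$ of Figure~\ref{fig:local-min-center}) --- and then proving that the resulting node count is genuinely \emph{minimized} on the diagonal with the precise value $\tfrac38 d^2$, which is where the somewhat unusual constant $\tfrac{4}{\sqrt6}$ originates. Minor additional care is needed to pass from a continuous area to an integer node count with only lower-order loss, and to handle equalities in probability when converting ``dominates'' into a statement about rank. It is worth stressing that the knight moves are essential here: without them the lateral coordinate can never increase, the dominating set of an axis-aligned node stays one-dimensional, and one recovers exactly the bad, linear-sized local minimum exploited in the preceding lemma.
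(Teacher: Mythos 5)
Your overall strategy coincides with the paper's: reduce the lemma to the per-node bound $\r(v)\ge\tfrac{3}{8}\,\d(\varphi(v),c^*)^2$ via center monotonicity (Lemma~\ref{lem:monoton}) applied to the fan of distance-decreasing switch paths (including the knight moves), then average against $\p$. The gap is in the step where you actually produce the constant. Placing $v$ at offsets $(a,b)$ from $c^*$ with $a+b=d$, the paper counts the dominating polygon as $S_1+S_2+S_3+S_4+S_5-3$ with $S_1=(a-1)(b-1)$, $S_2=A(a-1)$, $S_3=A(b+a/2)$, $S_4=A(a+b/2)$, $S_5=A(b-1)$, where $A(x)\ge x^2/4+1$ is its triangle count. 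To leading order this total equals $\tfrac{9}{16}(a+b)^2+\tfrac{3}{8}ab$, so the count is \emph{maximized} near the diagonal (at $a=b=d/2$ it is $\approx\tfrac{21}{32}d^2$, not $\tfrac{3}{8}d^2$) and \emph{minimized} near an axis (value $\approx\tfrac{9}{16}d^2$); your extremal analysis is backwards on both points. Your claim that at the diagonal the reachable region stays inside one quadrant is also false: the extreme rays in directions $(-2,+1)$ and $(+1,-2)$ launched from $(d/2,d/2)$ cross the axes at distance-level $3d/4$, strictly before reaching the center, so the spill triangles $S_3,S_4$ (legs $b+a/2$ and $a+b/2$) are present for \emph{every} position of $v$, the diagonal included. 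Consequently ``carrying out this summation'' as you propose would contradict your own claim rather than yield $\tfrac{3}{8}d^2$.

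The final inequality is nevertheless true, because the genuine minimum $\approx\tfrac{9}{16}d^2$ exceeds $\tfrac{3}{8}d^2$: in the paper, $\tfrac{6}{16}=\tfrac{3}{8}$ is not the tight value at any distinguished position but a deliberately slack uniform bound, chosen so that $S_{total}\ge\tfrac{6}{16}(a+b)^2$ can be verified algebraically for all $a,b\ge2$ with enough room to absorb the floors, the $-3$ correction, and the degenerate cases $a=1$ or $b=1$. So your argument is repairable --- replace the ``minimize over the level-$d$ diamond, tight at the diagonal'' step by the general-$(a,b)$ polygon count and a uniform lower bound, as the paper does --- but as written the key quantitative step fails, and the sentence locating the origin of $4/\sqrt{6}$ at the diagonal is incorrect. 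The remainder (the reduction to a per-node rank bound, the use of Lemma~\ref{lem:monoton}, the tie-breaking caveat in Definition~\ref{dfn:rank}, and the observation that the knight moves are what prevent the one-dimensional bad local minimum) matches the paper and is sound.
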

\begin{proof}
%
%
We consider a local minimum of the C-rule achieved by $\varphi$. Without loss of generality we assume that $\varphi$ is the identity function. Consider a node $v$ in general relative position to the center $c^*$ (see Figure~\ref{fig:local-min-center}). We want to estimate how many nodes have a larger probability (higher rank) than the node $v$. To achieve this estimation, we analyze the area of the largest polygon, such that every node inside and on the edges of the polygon belongs to some distance-decreasing path from $v$ to $c^*$. According to Lemma~\ref{lem:monoton} we then get the guarantee that for any local minimum with respect to the C-rule any node $u$ in the polygon (with a distance decreasing path from $v$) has $\p(v) \leq \p(u)$. Figure~\ref{fig:local-min-center} provides an example for this: The node $x_1$ is closer to the center than $v$ and switching between $x_1$ and $v$ is possible. This implies that $p(v) \leq p(x_1)$. Furthermore, if we look at the distance-decreasing paths $v,x_1,x_2,\ldots,x_7$ and $v,z_1,z_2,\ldots,x_7$ we obtain from Lemma~\ref{lem:monoton} that $\p(v) \leq \p(x_1) \leq p(x_2) \leq \ldots \leq p(x_7)$ and that $\p(v) \leq \p(z_1) \leq \ldots \leq p(x_7)$. These paths can furthermore be extended to any node in the polygon. All these nodes have such higher activity levels then $p(v)$. 




To get a bound on the rank of $v$, i.e., in the number of nodes that have a larger activity level, we count the number of nodes $S_{total}$ that are inside the polygon. This number generally involves many floors and ceilings. We avoid these by first calculating the number of nodes $A(x)$ bounded by a right triangle shape that starts at a point and whose (axis parallel) legs have length $x$ and $x/2$. We denote this number by $A(x)$ and in the following give a formula and an estimate for it that holds for any positive real $x$:
\begin{align*}
A(x):=\sum_{i=0}^{\left\lfloor x/2\right\rfloor}(\left\lfloor x\right\rfloor + 1 -2i)\ge
\begin{cases} \frac{x^2}{4}+1 & \text{if $x\ge1$,}
\\
1 & \text{if $x<1$.}
\end{cases}
\end{align*}

%

Now we are ready to calculate $S_{total}$. We denote the number of nodes in the middle rectangle as $S_1$, the number of nodes in the upper triangle as $S_2$ and so on according to Figure~\ref{fig:local-min-center}.
\begin{align*}
S_1&=(a-1)(b-1) &S_2=A(a-1) && S_3=A(b+a/2)\\
S_4&=A(a+b/2)   &S_5=A(b-1) 
\end{align*}
So, we obtain that $S_{total}=S_1+S_2+S_3+S_4+S_5-3$, where $-3$ is needed since the node $v$ should not be calculated (but was counted twice) and $c^*$ should be calculated once (but was counted twice).
By adding up the expressions we obtain for $a,b\ge2$:
\begin{align*}
S_{total}&=S_1+S_2+S_3+S_4+S_5-3\\
&\ge(a-1)(b-1)+ \frac{(a-1)^2}{4} + \\
&+\frac{(b+\tfrac{a}{2})^2}{4}+ \frac{(a+\tfrac{b}{2})^2}{4}+\\
&+ \frac{(b-1)^2}{4}+1
\ge\tfrac{6}{16}(a+b)^2
\end{align*}
%
Easy to verify that the inequality $S_{total}\ge\tfrac{6}{16}(a+b)^2$ holds also in the case where $a$, or $b$, or both equal to $1$.
Since the rank of $v$ is at least $S_{total}$, we get $\r(v)\ge\tfrac{6}{16}(\d(v,c^*))^2$, and thus: $\d(v,c^*)\le\tfrac{4}{\sqrt{6}}\sqrt{\r(v)}$. Now,
\begin{align*}
C&=\sum_{v\in V}\p(v)\d(v,c^*)\le\sum_{v\in V}\p(v) \tfrac{4}{\sqrt{6}}\sqrt{\r(v)}
=\tfrac{4}{\sqrt{6}}\E[\sqrt{R}].
\end{align*}
\end{proof}


Next we prove a lower bound for the cost of the optimum placement obtaining a similar expression in terms of the (expected) rank.

\begin{lemma}\label{lem:c-rule-lower}
$\mepl \ge C_{\min} \ge \frac{1}{\sqrt{2}}\E[\sqrt{R}]$.
\end{lemma}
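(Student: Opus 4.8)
The plan is to mirror the one-dimensional argument of Lemma~\ref{lem:1-D-lower}, replacing the linear count of points at a given distance by the quadratic count of lattice points in an $\ell_1$-ball. The first inequality $\mepl \ge C_{\min}$ is immediate from the corollary following Lemma~\ref{lem:epl-c}, so the entire task reduces to establishing $C_{\min} \ge \frac{1}{\sqrt2}\E[\sqrt R]$.

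First I would fix a placement $\varphi^*$ attaining $C_{\min}=C(\varphi^*)$, with center $c^*=c^*(\varphi^*)$, and consider an arbitrary node $v$ at $\ell_1$-distance $D:=\d(\varphi^*(v),c^*)$ from the center. The key structural observation is an exchange (optimality) argument: every node $u$ with $\p(u)>\p(v)$ must satisfy $\d(\varphi^*(u),c^*)\le D$, since otherwise swapping the positions of $u$ and $v$ would strictly decrease $\sum_{w}\p(w)\d(\varphi^*(w),c^*)=C_{\min}$, contradicting optimality of $\varphi^*$. Hence all nodes counted by the rank $\r(v)$ lie inside the $\ell_1$-ball of radius $D$ around $c^*$.

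The crux is then to count lattice points: the number of grid nodes at $\ell_1$-distance at most $D$ from a point is $2D^2+2D+1$ (one central node plus $4i$ nodes on each ring $i=1,\dots,D$). Discarding $v$ itself, this bounds the rank by $\r(v)\le 2D^2+2D\approx 2D^2$, from which $D\ge\frac{1}{\sqrt2}\sqrt{\r(v)}$, i.e. $\d(\varphi^*(v),c^*)\ge\frac{1}{\sqrt2}\sqrt{\r(v)}$. Averaging this pointwise bound against the activity distribution yields
\[
C_{\min}=\sum_{v\in V}\p(v)\,\d(\varphi^*(v),c^*)\ge\frac{1}{\sqrt2}\sum_{v\in V}\p(v)\sqrt{\r(v)}=\frac{1}{\sqrt2}\E[\sqrt R],
\]
which together with $\mepl\ge C_{\min}$ completes the argument.

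The main obstacle I anticipate is the lattice-point inversion rather than the exchange argument, which is a direct two-dimensional analogue of the ``at most two nodes at distance $i$'' step that was immediate on the line. Here the exact count $2D^2+2D+1$ carries a linear lower-order term that works slightly against the clean constant, so $\r(v)\le 2D^2$ is only the leading-order estimate; a fully rigorous version must either absorb the linear term into the dominant quadratic one (treating $1/\sqrt2$ as the genuine asymptotic constant, which is what drives the final $8/\sqrt3\approx4.62$ ratio when combined with Lemma~\ref{lemma:c-rule-upper}) or replace the single-ball count with a marginally more careful ring-by-ring estimate. Everything else is a routine transcription of the line proof.
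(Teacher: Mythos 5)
Your proposal follows essentially the same route as the paper's proof: fix an optimal $\varphi^*$, use the exchange argument to force every higher-activity node into the $\ell_1$-ball of radius $D=\d(\varphi^*(v),c^*)$ around $c^*$, count $4i$ nodes per ring to get $\r(v)\le 2D^2$ and hence $\d(\varphi^*(v),c^*)\ge\sqrt{\r(v)/2}$, and average against $\p$. If anything you are more careful than the paper, which writes $\r(v)\le\sum_{i=1}^{D}4i=2D^2$ and silently drops the $+2D$ lower-order term that you correctly flag as needing to be absorbed (asymptotically) into the quadratic count.
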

\begin{proof}
Let $\varphi^*$ be the placement  such that $C_{\min} = C(\varphi^*)$. Note that by definition $\varphi^*$ minimize the expected path to the center. Given the center $c^*(\varphi^*)$ and an arbitrary node $v$ with a distance $\d(v,c^*)$ from $c^*$, we again find an upper bound on the rank of $v$ by showing how many nodes can have a larger activity level than $v$. Again, all such nodes will be at most at the distance $\d(v,c^*)$ from the center, since otherwise the solution will not be a global optimum. There are exactly $4$ nodes at distance $1$ from $c^*$, $8$ nodes at the distance $2$ and in general $4i$ nodes at distance i. This lead to the to the following bound:
\begin{align*}
\r(v)&\le\sum_{i=1}^{\d(v,c^*)} 4i=2(\d(v,c^*))^2
\end{align*}
So, we obtain $\d(v,c^*)\ge\sqrt{\r(v)/2}$, and thus:\\
\begin{align*}
C&=\sum_{v\in V}\p(v)\d(v,c^*)=\sum_{v\in V}\p(v) \d(v,c^*)\\
&\ge\sum_{v}\p(v) \sqrt{\r(v)/2}=\E[\sqrt{R/2}].
\end{align*}
\end{proof}

Now we are ready to prove the result of Theorem~\ref{thm:approx-grid}. From Lemma~\ref{lem:epl-c} and~\ref{lem:c-rule-lower} we get that the minimum expected path length is at least $\E[\sqrt{R/2}]$. From Lemma~\ref{lemma:c-rule-upper} we obtain that $C(\varphi) \le \E[\frac{4}{\sqrt{6}} \sqrt{R}]$, and thus by Lemma~\ref{lem:epl-c} the expected path length of a local minimum can not be larger than $2\E[\frac{4}{\sqrt{6}} \sqrt{R}]$. Therefore the ratio between the optimal solution and any local minimum with regards to the C-rule is at most $2 \frac{4\sqrt{2}}{\sqrt{6}} \approx 4.62$.


\section{Clustered Requests} \label{sec:cluster}

We have demonstrated that self-adjusting networks and their local switching rules work well on grid networks with symmetric product distributions. We now briefly discuss some interesting preliminary results on a more general type of request distributions: clustered requests. For this we consider situations where processes can be clustered into groups such that communication takes place only or predominantly between processes belonging to the same group. This locality is inspired by practice and we believe that such a structure in the requests is quite common.

Ideally, a self-adjusting network ``detects'' such clusters and arranges processes such that groups will reside in separate parts of the network infrastructure. Such an arrangement facilitates short routes since requests between group members get routed quickly without leaving the group. Such a well clustered placement of nodes can be a drastic improvement of the expected path length compared to a non-optimized placement. In particular, any placement that is oblivious to the clustering, e.g., a random placement, will have a bad performance -- leaving plenty of room for improvement. We believe that the simple switching strategies presented in this paper perform very well in many such settings.

Our investigations and simulations on $d$-dimensional torus topologies (see Figure~\ref{fig:clusters_sim}) have led to several interesting preliminary results in this direction: On the negative side we were able to construct local minima in many topologies that have a poor performance compared to the global optimum. This shows that it is not possible to give the same type of strong approximation guarantee independent of the initialization. In simulations we observed that for $d=1$, i.e., on a ring topology, even random initializations lead to bad local minima. The reason for this is that connectivities in the ring are too restricted to allow the resolution of distributed clusters without disturbing other, already fixed, clusters. Fortunately, this changes drastically for higher dimensions. For  $d>1$ local minima become extremely unstable and sensitive to small perturbations. They can only occur if one starts in a carefully constructed worst-case initialization. This observation is supported by our simulations which produce very promising results. Figure~\ref{fig:clusters_sim} shows the outcome of some of these simulations in a $2$-dimensional torus network. We plan to investigate the performance of our self-adjusting networks on this and other topologies in future work.

\subsection{Simulation Results for 2-D Torus}\label{app_simu_torus}

In this section we show simulation results for a $2$-dimensional torus topology. We assume a uniform clustered distribution of requests with uniform clusters' activities. 

Figure~\ref{fig:clusters_sim} shows an example for the improvement in a $2$-dimensional torus.
Overall, the greedy switching algorithm successfully detects clusters and groups them together. However, there are still some clusters that are not connected. In Figure~\ref{fig:clusters_sim} (a) we can see the initial random placement of the clustered nodes. Nodes with a black bold frame are the centers of their clusters (as was defined earlier, a center is a node that has the minimal expected distance to all other nodes in the cluster). In Figure~\ref{fig:clusters_sim} (b) we see the placement of the nodes achieved by the greedy M-rule switching strategy. Although it looks that the nodes are highly grouped, we can see that the shapes of the clusters are not optimal (an optimal placement should look like a circle around the center of the cluster). Some clusters are stretched (e.g., brown cluster) and some are even not connected (e.g, orange cluster). 

When every node on the torus belongs to an active cluster, we can frequently run into situation in which two nodes will not switch even if it is improvement for one of the clusters. This suboptimal local solution can be improved if we allow some nodes on a torus to be inactive. In the following two figures we see the results of such simulation where 50\% on the nodes are inactive. In Figure~\ref{fig:clusters_sim} (c) we can see the initial random placement of the clustered nodes, (the inactive nodes are white colored). In Figure~\ref{fig:clusters_sim} (d) we see the placement of a local minimum of EPL achieved by the same greedy switching strategy. But now we can observe much nicer concentration of the nodes around their centers. These figures lead to many interesting future research questions about the topic. 
At the Figure~\ref{fig:clusters_sim} (c) there is a QR link to the animated version of the figures.

\begin{figure}
\begin{center}
\includegraphics[width=3.4in]{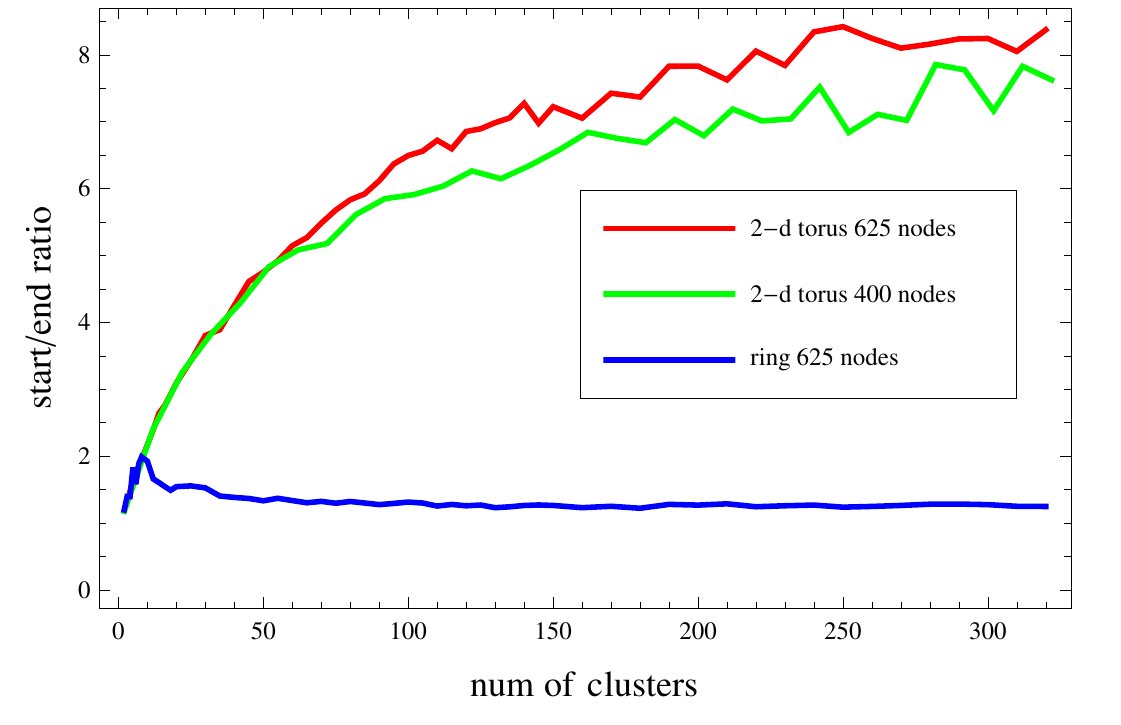}\end{center}
\caption{\label{fig:start_end_ratio_ring_torus_1}Greedy switching on Ring and $2$-dimensional torus. Ratio between the initial and the final EPL, as a function of number of clusters.}
\end{figure}

\begin{figure*} 
\centering
\begin{tabular}{ccc}
\begin{minipage}{3in}
\includegraphics[width=2.8in]{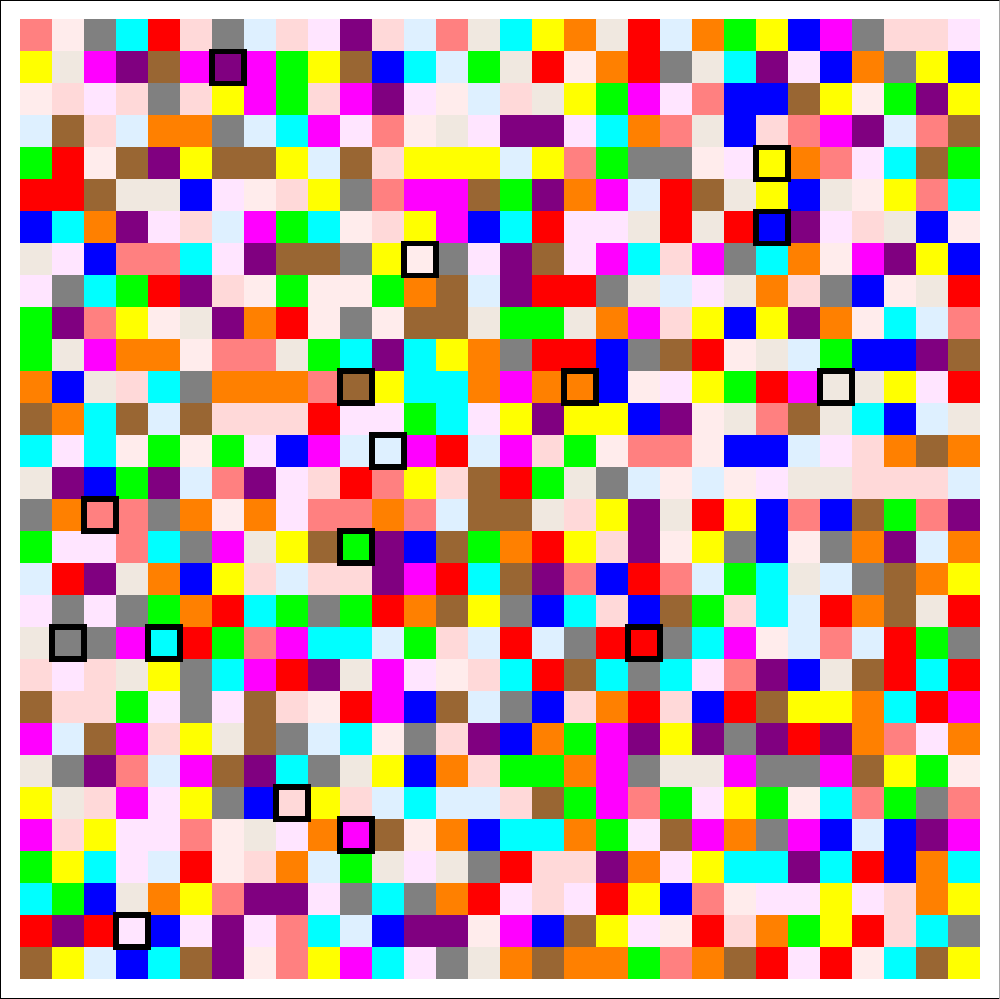}
\end{minipage}&
\begin{minipage}{3in}
\includegraphics[width=2.8in]{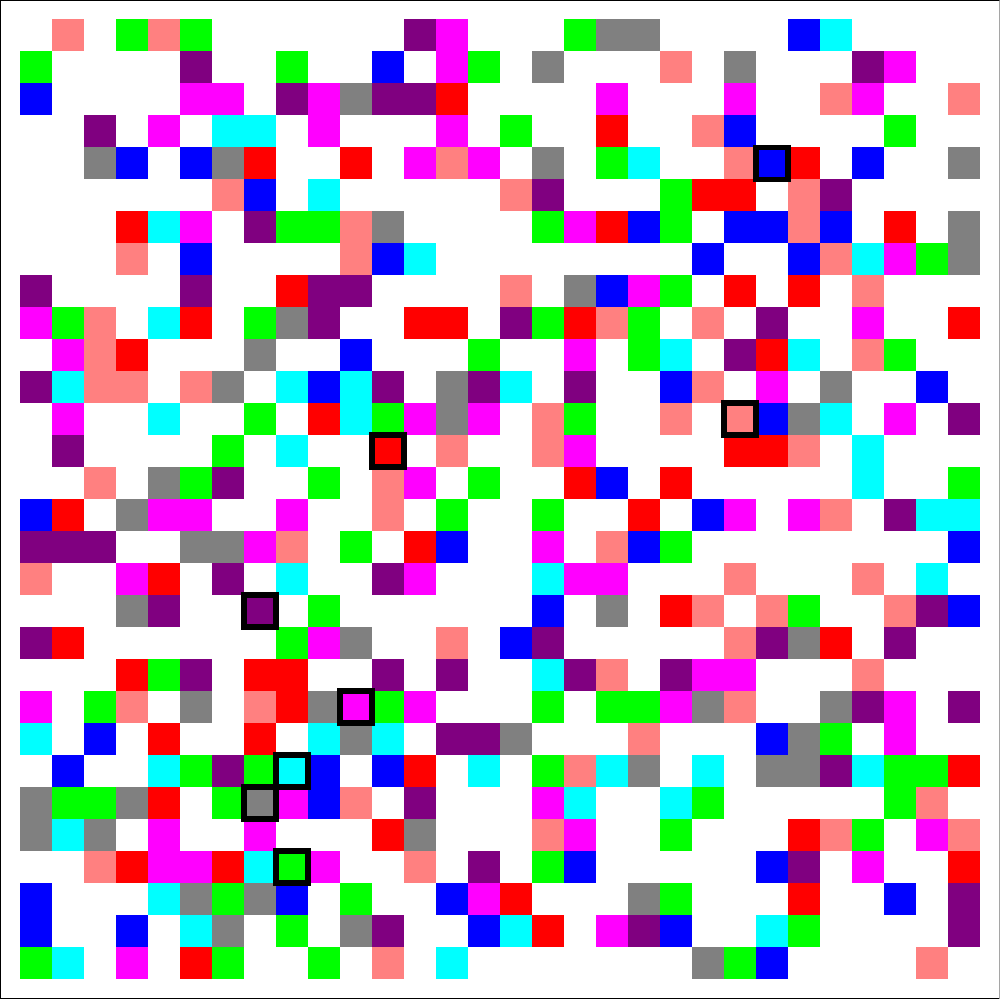}
\end{minipage}&
\multirow{13}{*}{\hspace{-4mm}\includegraphics[width=1in]{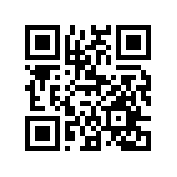}}\\
(a) -- Initial random placement. $\left|V_0\right|=0$. & (c) -- Initial random placement.  $\left|V_0\right|=n/2$. & \hspace{-4mm}(e)\vspace{2mm}\\
\begin{minipage}{3in}
\includegraphics[width=2.8in]{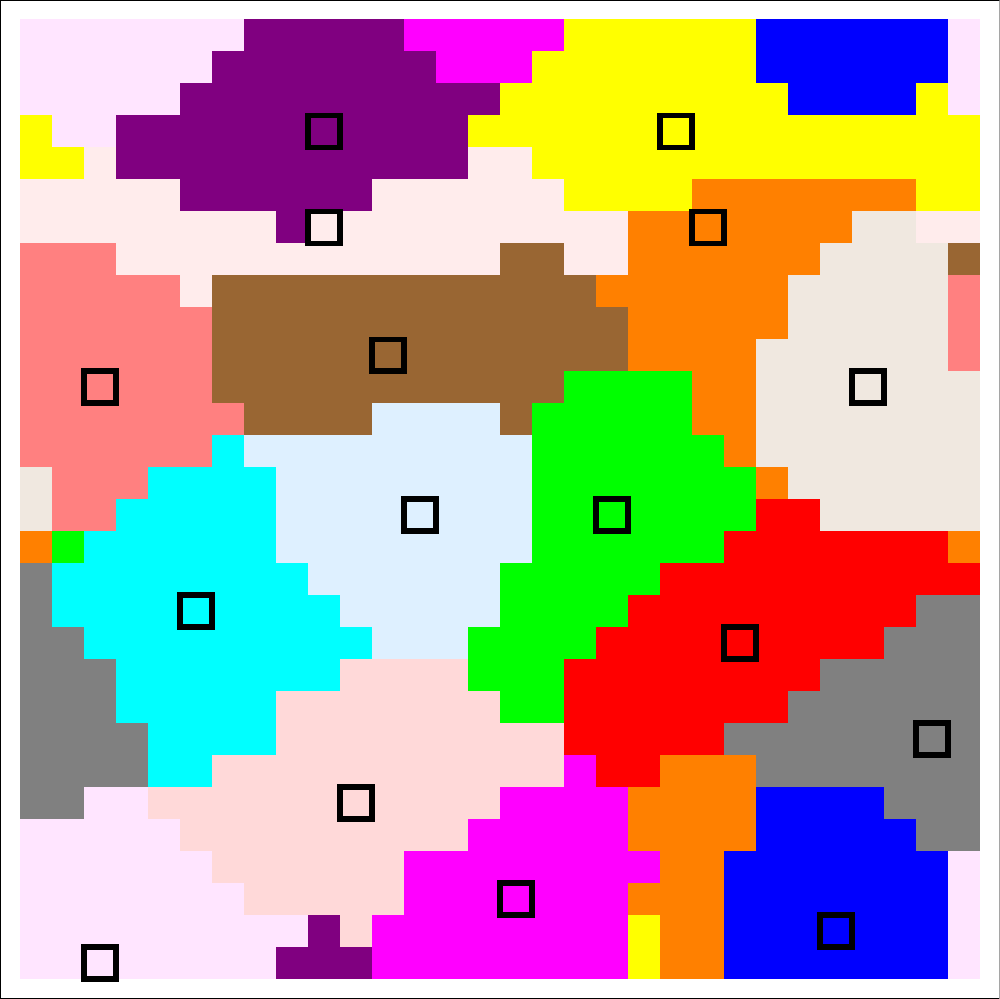}
\end{minipage}&
\begin{minipage}{3in}
\includegraphics[width=2.8in]{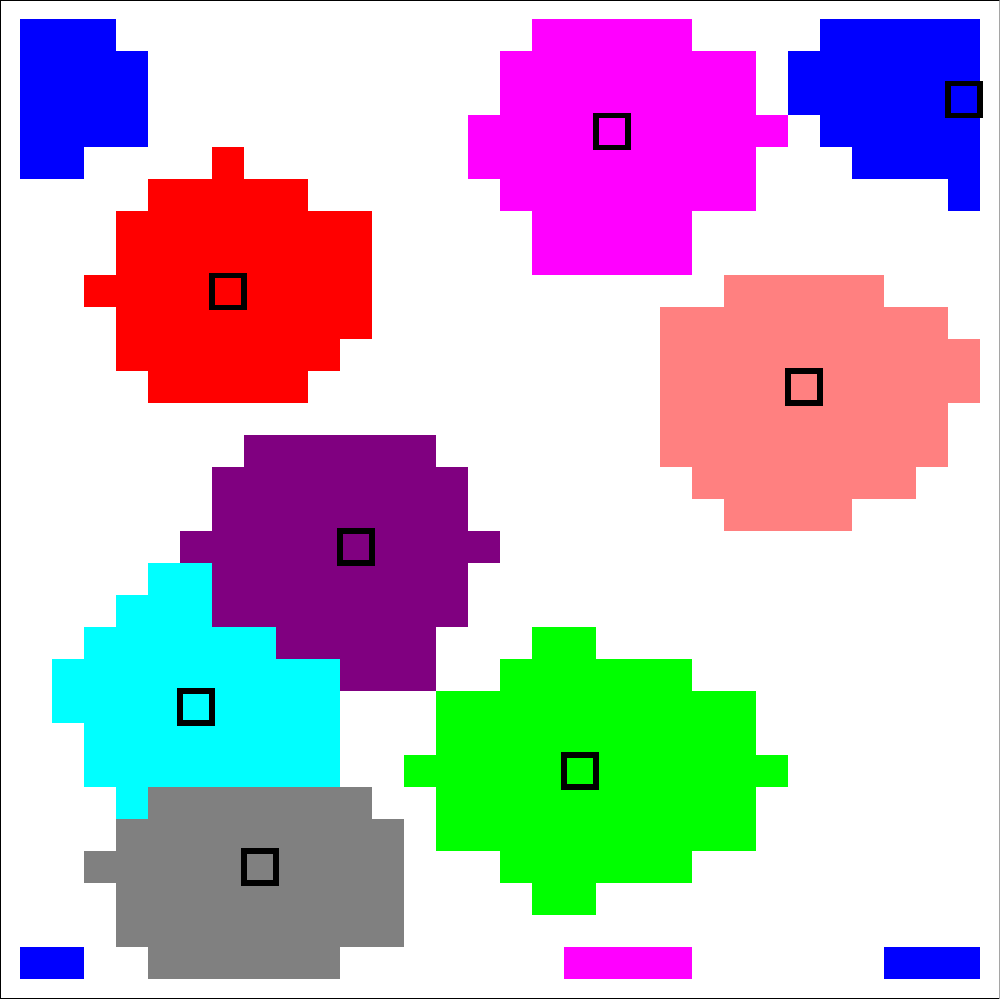}
\end{minipage}\\
(b) -- Final placement. $\left|V_0\right|=0$. & (d) -- Final placement. $\left|V_0\right|=n/2.$
\end{tabular}
\caption{
(a) Torus with 900 nodes, 16 clusters and no nodes in the inactive cluster. Nodes are placed at random positions.
(b) Final placement after applying the local greedy strategy in a round robin fashion. All clusters are grouped together, but their shapes are not optimal for a given cluster.
(c) Torus with 900 nodes, 8 active clusters and half of the nodes in the inactive cluster. $V_0$ has $n/2$ nodes. Nodes are placed at random positions.
(d) Final placement after applying the local greedy strategy in a round robin fashion. All clusters are grouped together and almost optimally shaped around their center. (e) Link to the  animation of this simulation (\url{http://www.bgu.ac.il/\texttildelow avin/pmwiki/pmwiki.php?n=Main.Self-AdjustingNetworks}).
}
\label{fig:clusters_sim}
\end{figure*}

\subsection{Bad Local Minima for Ring Graph}\label{app_ring_bad}
We first give a formal model for the clustered requests distribution.
We assume, for simplicity, that the clustered requests distribution is a mixture of product distributions on disjoint supports. In the case that the product distributions are uniform, this results in the following.
Let $k$ be the number of groups and let $V_1, \ldots, V_k$ be the vertex disjoint node sets of these groups and let $V_0 = V \setminus \bigcup_i V_i$ be the set of inactive nodes. Let $p_i$ for $1 \leq i \leq k$ be the activity level of each group such that $\sum_{i=1}^k p_i = 1$. Given this we have that the probability $\p(u,v)$ for a request between a nodes $u,v \in V$ is:

\begin{align}
\p(u,v) = \left\{
\begin{array}{c l}
p_i / |V_i|^2 & u,v\in V_i \text{ , }i>0\label{eq:uni_clustered_distr}\\
0 & \text{otherwise}
\end{array}
\right. 
\end{align}

For the 1-D ring graph, the performance of the greedy switching strategy is bad. 
First, the simulation results presented in Figure~\ref{fig:start_end_ratio_ring_torus_1}, show the improvement ratio between the EPL of a random starting locations to the EPL of a local minimum reached by the local greedy M-rule as a function of number of clusters. In the ring graph, the improvement is almost negligible and it is constant with the number of clusters. 
Notice, when a greedy switching strategy is applied to a $2$-dimensional torus graph, the improvement ratio is much better and it is growing with the number of clusters. 

Next, we prove the following lemma that shows that the ratio between the EPL of a worst case local minima and the EPL of the best placement is not a constant anymore but is growing with number of clusters.

\begin{lemma}\label{lemma:bad_local_ring}
In a ring network with uniform clustered distribution of requests, 
the ratio between the worst case stable nodes placement (i.e., a local minima) and the best case placement is at least $1.5c$, where $c$ is the number of clusters.
\end{lemma}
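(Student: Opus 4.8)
The plan is to prove the lower bound by exhibiting, for each $c$, a concrete ring instance together with an explicit placement $\varphi_{\mathrm{bad}}$ that (i) is a genuine local minimum of the greedy rule and (ii) satisfies $\epl(\varphi_{\mathrm{bad}}) \ge 1.5\,c \cdot \mepl$. Since the objective depends only on the intra-cluster distances, the optimum is transparent: placing every cluster $V_i$ on its own block of $\card{V_i}$ consecutive ring vertices makes each cluster a short interval, and the average intra-cluster distance of a block of $s$ vertices is $\tfrac{1}{s^2}\sum_{u,v}\d(u,v)=\tfrac{s^2-1}{3s}$. Summing $\sum_i p_i\,\tfrac{\card{V_i}^2-1}{3\card{V_i}}$ gives a clean closed form for $\mepl$, which will be the denominator of the ratio. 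The whole difficulty is therefore pushed into the numerator: building a placement whose clusters are forced to stay spread out around the ring.

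For $\varphi_{\mathrm{bad}}$ I would split each cluster into two equal halves and place the halves in two far-apart arcs, ordering the clusters around the ring so that for any cluster the only way to reunite its halves is to drag a node through the arcs occupied by other clusters. The key claim — and the main obstacle — is that this placement is a local minimum: every swap of two ring-adjacent nodes must be shown to be non-improving for the global objective of eq.~(\ref{eq:uni_clustered_distr}). This is delicate on a ring, because a naive ``evenly spread'' or ``antipodal'' layout is in fact \emph{not} stable: a single adjacent swap there can shorten two clusters' internal distances at once, since each boundary node, when nudged inward, approaches its partners along the short way around the cycle. To obtain genuine stability I would tune the construction so that at every cluster boundary the gain of the cluster that gets more compact is cancelled or outweighed by the loss of the neighbouring cluster whose node is pushed away from its own mates; quantifying the per-edge change $\Delta = \sum_k p_k \tfrac{1}{\card{V_k}^2}\sum_{w\in V_k}\bigl(\d(\cdot)-\d(\cdot)\bigr)$ and verifying $\Delta\ge 0$ for all $O(n)$ edges is the technical heart of the proof.

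Once stability is established, I would bound $\epl(\varphi_{\mathrm{bad}})$ from below: because each cluster has half of its ordered pairs crossing between two arcs that are $\Theta(n)$ apart, its average intra-cluster distance is $\Theta(n)$, and since $\sum_i p_i = 1$ this yields $\epl(\varphi_{\mathrm{bad}}) = \Theta(n)$ while $\mepl = \Theta(n/c)$. Dividing the two estimates gives a ratio linear in $c$; carrying the exact constants — the $\tfrac{s^2-1}{3s}$ in the optimum against the cross-arc distance sum in $\varphi_{\mathrm{bad}}$, and handling the ring wrap-around and the $u=v$ diagonal terms carefully — is what should sharpen $\Omega(c)$ to the stated $1.5c$. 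I expect essentially all the effort to sit in the local-minimality verification of the previous paragraph; the two EPL computations are routine arithmetic by comparison, and the final ratio is then immediate.
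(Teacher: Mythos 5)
Your overall frame (a stable bad placement compared against the blocked placement whose per-cluster average distance is $\approx s/3$) matches the paper's, but your bad construction is different, and it is exactly there that the proposal has a genuine gap: claim (i), local minimality of the two-antipodal-arcs layout, is never established, and for the natural symmetric instantiation it is actually \emph{false}. Concretely, let a cluster of $s$ nodes be split into two contiguous arcs of $s/2$ nodes at ring distance $n/2$, and consider the swap at the boundary between cluster $A$'s arc and cluster $B$'s arc. $A$'s boundary node moves one step away from its own arc, losing $1$ to each of its $s/2-1$ near mates; but its far half sits at (near-)antipodal distance, and since $\min(x,\,n-x)$ at $x=n/2$ decreases under a unit move in \emph{either} direction, it gains $1$ on each of its $s/2$ far mates — net change $-1$, and the same count applies to $B$'s boundary node, so the swap strictly decreases EPL and the layout is unstable under both the M-rule and the C-rule. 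If you offset the halves to kill this effect, the sign of the per-swap change depends on the offsets on both sides of every boundary, and you would need all $\Theta(n)$ boundary inequalities to hold simultaneously; you explicitly defer this to unspecified ``tuning,'' which is the entire content of the claim. Nothing in the proposal shows such a tuning exists.

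The paper sidesteps this difficulty with a different bad placement that makes stability trivial rather than the hard part: interleave the $c$ clusters round-robin, so each cluster of $2k+1$ nodes occupies every $c$-th ring position. Then each node's distances to its cluster mates are $ci$ for $i=1,\dots,k$ on each side, and any adjacent swap changes these distances by $+1$ and $-1$ in canceling pairs, so \emph{every} swap is exactly neutral — local minimality follows from a two-line symmetry computation, with per-node expected distance $\frac{ck(k+1)}{2k+1}$, which against the blocked placement's $\frac{k(2k+2)}{3(2k+1)}$ yields the stated $1.5c$. That is the idea your construction is missing: spreading each cluster uniformly (one node per window of $c$ positions) instead of into two lumps removes any boundary where a swap could trade a cluster's compactness against a neighbor's. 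A secondary issue: even granting stability, your own estimates (numerator $\approx n/4$ from cross-arc pairs on half the mass, denominator $\approx n/(3c)$) give a ratio of about $0.75c$; your construction's EPL is essentially the same $n/4$ as the paper's interleaved one, so reaching $1.5c$ hinges on the pair-counting convention used for the blocked optimum, not on ``careful constants'' in your numerator — another reason the final step of your plan would not go through as written.
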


\begin{proof}
First we show that there exists a bad local minimum on the ring graph and uniform clustered distribution of requests. We assume that there are $c$ clusters of $2k+1$ nodes each, and thus the total number of nodes in a ring is $n=c(2k+1)$. Let $\varphi_l$ be the following placement of the nodes: $(v_1^1,v_1^2,v_1^3,\ldots,v_2^1,v_2^2,v_2^3,\ldots)$, where $v_i^j\in V_j$.
We now prove that
the placement of nodes $\varphi_l$ is locally optimal, i.e., there are no improvement switches.

Let's denote the expected cost of the paths where the node $v$ is either source or destination as $\text{E}_v$. Due to the symmetric location and the uniform distribution, $\text{E}_v$ is the same for all the nodes $v\in V$.  
Let's assume that $v\in V_1$, where $V_1$ is the set of nodes in cluster $1$, and $\left| V_1\right |=2k+1$. Let's denote the probability of choosing the node $u\in{V_1}$ as a pair for the given node $v\in{V_1}$ as $p_{u|v}$. Clearly, $p_{u|v}=\frac{1}{|V_1|}$. 
Before the switch:
\begin{align*}
\E_v&=\sum_{u\in V_1}p_{u|v}d_{vu}= \frac{1}{\left| V_1\right |}\cdot 2\sum_{i=1}^k c\cdot i =\frac{c\cdot k(k+1)}{2k+1} 
\end{align*}
Due to the symmetry, there is no difference with what node we will try to switch the node $v$. So, let's assume that we switch it with its right-hand neighbor -- $w$. In order to check the difference with the path cost before the switch, we need to consider all the paths in which nodes $v$ and $w$ are involved. Clearly, even after the switch $\text{E}_v=\text{E}_w$ due to the symmetry, so we will now find the $\text{E}_v$ after the switch and will compare it to the $\text{E}_v$ before the switch.
After the switch:
\begin{align*}
\E_v^{'}&=\sum_{u\in V_1}p_{u|v}d_{vu}\\
&= \frac{1}{\left| V_1\right |}\left(\sum_{i=1}^k (c\cdot i-1)+\sum_{i=1}^k (c\cdot i+1)\right)=\frac{c\cdot k(k+1)}{2k+1} 
\end{align*}
Since all possible switches are the same due to the symmetry, we can conclude that there is no switch that will improve the expected paths cost and the placement $\varphi_l$ is a local minima.

Now let's assume different placement of the nodes -- $\varphi_{opt}$:  $(v_1^1,v_2^1,...,v_{\left | V_1 \right |}^1,v_1^2,v_2^2,...,v_{\left | V_2 \right |}^2,\ldots)$, where $v_i^j\in V_j$.
For this, probably optimal, placement, we will calculate now the expected path length (EPL).

We consider the part of the ring of $2k+1$ nodes that belongs to a single cluster. Number of paths of length 1 inside the cluster is $2\cdot 2k$. Number of paths of length 2 is $2\cdot (2k-1)$, length 3 is $2\cdot (2k-2)$, length $m$ is $2\cdot(2k-m+1)$, and of length $2k$ is $2$. So, we obtain the average path length in one cluster:
\begin{align*}
\E&=\sum_{v,u\in V_1}p_{vu}d_{vu}= \frac{1}{2\left| V_1\right |^2}\cdot2\sum_{i=1}^{2k} m(2k-m+1)\\
&=\frac{k(2k+1)(4k+4)}{3\cdot 2(2k+1)^2}=\frac{k(2k+2)}{3(2k+1)}.
\end{align*}

\end{proof}

\section{Conclusions and Future Work}\label{sec:conclusion}
In this preliminary work, we formally defined the MEPL problem which has practical significance in saving energy of fixed infrastructure network. We showed that in general cases, the problem is hard to compute, but under some realistic assumptions on network infrastructure and traffic patterns, we propose efficient local and distributed algorithms that achieve almost optimal solution. Our algorithms are based on migration of processes, which allows network optimization without changing the underlying infrastructure. This idea integrates well with an SDN concept which will probably include process migration functionality in its management platform.

In future work, we plan to extend our results to other topologies that are used in data centers networks, e.g., fat trees \cite{fat:trees:Leiserson:1985}. We also aim to investigate other types of requests distributions that are based on real data.

\bibliographystyle{IEEEtran}
\bibliography{literature}

\end{document}